\newlength{\halfpagewidth}
\newtheorem{theorem}{\textbf{Theorem}}
\newtheorem{lemma}{\textbf{Lemma}}
\newtheorem{corollary}{\textbf{Corollary}}
\newtheorem{proof}{\textbf{Proof}}
\newtheorem{definition}{\textbf{Definition}}
\newtheorem{proposition}{\textit{Proposition}}
\def\ScaleIfNeeded{%
\ifdim\Gin@nat@width>\linewidth \linewidth \else \Gin@nat@width
\fi } \makeatother
\begin{document}
%

\title{\Huge{Content Placement in Cache-Enabled {\color{black}Sub-6 GHz} and Millimeter-Wave Multi-antenna Dense Small Cell Networks}}

 \author{ Yongxu Zhu, Gan Zheng,~\IEEEmembership{Senior Member, IEEE}, Lifeng Wang,~\IEEEmembership{Member, IEEE}, Kai-Kit Wong,~\IEEEmembership{Fellow, IEEE}, and Liqiang Zhao,~\IEEEmembership{Member, IEEE}
 \thanks{Y. Zhu and G. Zheng are  with the Wolfson School of Mechanical, Electrical and Manufacturing Engineering, Loughborough University, Leicestershire, LE11 3TU, UK (Email: \{y.zhu4, g.zheng\}@lboro.ac.uk).}\thanks{L. Wang and K.-K. Wong are with the Department of Electronic and Electrical Engineering, University College London, London, WC1E 6BT, UK (Email: \{lifeng.wang, kai-kit.wong\}@ucl.ac.uk).}
\thanks{L. Zhao is with State Key Laboratory of Integrated Service Networks, Xidian University, Xi’an 710071, China and also with the National Mobile Communications Research Laboratory, Southeast University, Nanjing 210096, China (Email: lqzhao@mail.xidian.edu.cn).}
}
 
\maketitle
\vspace{-15mm}
\begin{abstract}
This paper studies the performance of cache-enabled dense small cell networks consisting of multi-antenna \textcolor{black}{sub-6 GHz} and millimeter-wave base stations. \textcolor{black}{Different from the existing works which only consider a single antenna at each base station, the optimal content placement is unknown when the base stations have multiple antennas.} We first derive the successful content delivery {probability} {by accounting for the key channel features  at sub-6 GHz and mmWave frequencies}.    \textcolor{black}{
The maximization of the successful content delivery probability is a challenging problem. To tackle it, we first propose a constrained cross-entropy algorithm which achieves the  near-optimal solution with moderate complexity. We then develop another  simple yet effective heuristic probabilistic content placement scheme, termed two-stair algorithm, which strikes a balance between caching the most popular contents and achieving content diversity.}  \textcolor{black}{Numerical results demonstrate the superior performance of the  constrained  cross-entropy method and that the two-stair algorithm yields significantly better performance than {only} caching the most popular contents.} The comparisons between the \textcolor{black}{sub-6 GHz} and {mmWave} systems  reveal  an interesting tradeoff between caching capacity and density for the {mmWave} system to achieve  similar performance as the \textcolor{black}{sub-6 GHz} system.
\end{abstract}


%
\begin{IEEEkeywords}
\textcolor{black}{Sub-6 GHz}, millimeter wave, caching placement, user association, dense networks.
\end{IEEEkeywords}

\section{Introduction}
The global mobile data traffic continues growing at an unprecedented pace and will reach 49 exabytes monthly by 2021,  of which 78 percent will be video contents\cite{Cisco}.
To meet the high capacity requirement for the future mobile networks, one promising solution is network densification, i.e., deploying dense small cell base stations (SBSs) in the existing \textcolor{black}{macrocell} cellular networks. \textcolor{black}{Although large numbers of small cells shorten the communication distance, the major challenge is to transfer the huge amount of mobile data from the core networks to the small cells and this imposes stringent demands on backhaul links.} To address this problem, caching popular contents at small cells has been proposed as one of the most effective solutions, considering the fact that most mobile data are  contents such as video, weather forecasts, news and maps, that are repeatedly requested and cacheable \cite{fang2014survey}. The combination of small cells and caching will bring content closer to users, decrease backhaul traffic  and reduce transmission delays, thus alleviating many bottleneck problems in wireless content {delivery} networks. {\color{black}This paper focuses on the caching design at
both sub-6 GHz ($\mu$Wave) and millimeter-wave ($\mathrm{mm}$Wave)\footnote{{\color{black}In this paper, we focus on $\mathrm{mm}$Wave frequencies from 30 GHz to 300 GHz.}} SBSs in dense small cell networks.}

\subsection{Related Works}
\subsubsection{Caching in $\mu$Wave and $\mathrm{mm}$Wave networks}
{\color{black}{ MmWave communication has received much interest for providing high capacity because there are vast amount of inexpensive spectra available in the 30 GHz-300 GHz range. However, compared to $\mu$Wave frequencies, mmWave channel experiences excessive attenuation due to rainfall, atmospheric or gaseous absorption, and is susceptible to blockage. To redeem these drawbacks, mmWave small cells need to adopt narrow beamforming and be densely deployed in an attempt to provide seemless coverage~\cite{Larsson_mimo,Heath_comparison,wang_magazine}. The study of content caching applications in mmWave networks is of great importance, due to the fact that mmWave will be a key component of future wireless access and content caching at the edge of networks is one of 5G service requirements~\cite{wang_magazine}. Cache assignment with video streaming in mmWave SBSs on the highway is discussed in \cite{qiao2016proactive} and it is shown to significantly reduce the connection and retrieval
delays.}} Certainly, combining the advantages of
\textcolor{black}{$\mu$Wave} and $\mathrm{mm}$Wave technologies will bring more benefits \cite{rois2016heterogeneous}. Caching in dual-mode SBSs that integrate both \textcolor{black}{$\mu$Wave} and $\mathrm{mm}$Wave  frequencies is studied in \cite{semiari2017caching}, where dynamic matching game-theoretic approach is applied to maximize the handovers to SBSs in the mobility management scenarios. 
The proposed methods can minimize handover failures and reduce  energy consumption  in highly mobile heterogeneous networks. {\color{black}{Dynamic traffic in cache-enabled network was studied in \cite{xia2016modeling}.}}

{\color{black}{Recent contributions also pay attention to the caching in MIMO networks such as \cite{liu2015exploiting,liu2014cache,yang2016content}. In \cite{liu2015exploiting,liu2014cache}, cache-enabled cooperative MIMO framework for wireless video streaming is investigated. In \cite{yang2016content}, coded caching for downlink MIMO channel is discussed.}}

\subsubsection{Optimization of content placement}
 Content placement
 {with finite cache size} is the key issue in caching design, since unplanned caching in nearby SBSs will 
 {result in} more interference. The traditional method of caching most popular content (MPC) in wired networks is no longer optimal when considering the wireless transmission. A strategy that combines MPC and the largest content diversity  caching is proposed in \cite{chen2016cooperative}, together with cooperative transmission in cluster-centric small cell networks. This strategy is   extended to the distributed relay networks  with relay clustering in \cite{zheng2016optimization} to  combat the half-duplex constraint, and it significantly improves the  outage performance.
 A multi-threshold caching that allows {BSs} to store different number of copies of contents according to their popularity is proposed in  \cite{ao2015distributed}, and it allows a {finer partitioning} of the cache space than binary threshold, but its complexity is exponential in the number of thresholds.

 {\color{black}{Probabilistic content placement under random network topologies has also been investigated.}} In \cite{Blaszczyszyn_2015},  the optimal content caching probability that maximizes the hit probability 
 \textcolor{black}{is derived}. The results are
 extended to heterogeneous cellular networks in \cite{HCN_prob}  which shows that caching the most popular contents in the macro BSs is almost optimal while it is in general not optimal for SBSs.

\subsubsection{Caching in Heterogeneous Networks}
Extensive works have been carried out to understand the performance gain of  caching for  heterogeneous   networks (HetNets) and stochastic geometry is the commonly used approach.
In \cite{li2016optimization}, the optimal probabilistic caching to maximize the successful delivery probability is considered in
{a multi-tier HetNet}. The cache-enabled heterogeneous signal-antenna cellular networks are investigated in \cite{wen2016cache}. The optimal probabilistic content placement for the interference-limited cases is derived, and the result shows that the optimal placement probability is linearly proportional to the square root of the content popularity with an offset depending on BS caching capabilities.  
 Caching policies to maximization of  success probability and area spectral efficiency of cache-enabled HetNets are studied in \cite{liu2016cache}, and the results show that the optimal caching probability is less  skewed to maximize the success probability but is more skewed to maximize the area spectral efficiency.
{The work of \cite{wen2017random} proposes a joint BS caching and cooperation for maximizing the successful transmission probability in a multi-tier HetNet.} A local optimum  is obtained in the general case and  global optimal solutions are achieved in some special cases.
 Cache-based channel selection diversity and network interference are studied in \cite{chae2016caching} in stochastic wireless caching helper networks, and solutions for  noise-limited networks and  interference-limited networks are derived, respectively.

\subsection{Contributions and Organization}
 The existing caching design for SBSs are restricted to the single-antenna case and mainly for the $\mu$Wave band. Little is known about the impact of multiple antennas at the densely deployed SBSs and the adoption of $\mathrm{mm}$Wave band on the successful content delivery and the optimal content placement. \textcolor{black}{Analyzing multi-antenna networks using stochastic geometry is a known difficulty, as acknowledged in \cite{yu2017tractable}}. In contrast to existing works, in this paper we analyze the performance of caching in multi-antenna SBSs in $\mu$Wave and $\mathrm{mm}$Wave networks, and propose  probabilistic content placement schemes to maximize the performance of content delivery. The main contributions of this paper are summarised as follows:
\begin{itemize}
\item
 {Derivation of successful content delivery probability (SCDP) of multi-antenna SBSs.} We use stochastic geometry to model wireless caching in multi-antenna dense small cell networks in both $\mu$wave and $\mathrm{mm}$wave bands. The SCDPs for both types of cache-enabled SBSs are derived. The results characterize the dependence of  the  SCDPs on parameters such as channel effects, caching placement probability, SBS density, transmission power and number of antennas.
\item
\textcolor{black}{Development of a near-optimal  cross-entropy optimization (CEO) method for a general distribution of content requests.  The derived SCDPs do not admit a closed form, and are highly complex to optimize. To tackle this difficulty, we first propose  a constrained CEO (CCEO) based algorithm that optimizes the SCDPs. The original unconstrained CEO algorithm is a stochastic optimization method based on  adaptive importance sampling that can achieve the near-optimal solution with moderate complexity and      guaranteed convergence  \cite{CEO}. We adapt this method to  deal with the caching capacity constraints and the probabilities constraints in our problem. }
\item
 {Design of a simple heuristic  content placement algorithm.} To further reduce the complexity, we propose a heuristic two-stage algorithm  to maximize the SCDP via probabilistic content placement when the content request probability follows the {Zipf} distribution \cite{Zipf}. The algorithm is designed by combining MPC and caching diversity (CD) schemes while taking into account the content popularity. The solution  demonstrates near-optimal performance in single-antenna systems, and various  advantages in  multi-antenna scenarios.
\item Numerical results  show that in contrast to the traditional way of deploying much higher density  SBSs or installing many more antennas, increasing caching capacity at $\mathrm{mm}$Wave SBSs provides a low-cost solution to  achieve comparable SCDP performance as $\mu$Wave systems.
\end{itemize}

The rest of this paper is organized as follows. The system model is presented in Section II. The analysis of SCDPs for $\mu$Wave and $\mathrm{mm}$Wave systems are provided in Section III. Two probabilistic content placement schemes are described in Section IV. Simulation and numerical results as well as discussions are given in Section V, followed by concluding remarks in Section VI.

\section{System Model}
We consider a cache-enabled dense small cell networks consisting of the $\mu$Wave and $\mathrm{mm}$Wave SBSs tiers. In such networks, {\color{black}{each user equipment (UE) in a tier is associated with the nearest SBS that has cached the desired content, and the optimal designs of content placement under such association assumption can address the concern that operators are required to place the content caches close to UEs~\cite{3GPP}.}} We assume that there is a finite content library denoted as $\mathcal{F}:=\lbrace f_1,\dots,f_j,\dots,f_J \rbrace$, where $f_j$ is the $j$-th most popular content and the number of {contents} is $J$, \textcolor{black}{we assume each content has normalized size of 1 and each BS can only store up to $M$ contents \cite{ao2015distributed,wen2016cache,chae2016caching}. The {analysis and optimization }can be applied to the case of unequal content sizes.} It is assumed that $M \ll J$. {\color{black}{The request probability for the $j$-th content is $a_j$, and $\sum\nolimits_{j{\rm{ = }}1}^J {{a_j}} {\rm{ = }}1$.}} Without loss of generality, we assume the contents are sorted according to a descending order of  $a_j$.

\subsection{Probabilistic Content Placement}
\begin{figure}[t!]
\centering
\includegraphics[width=1.5 in]{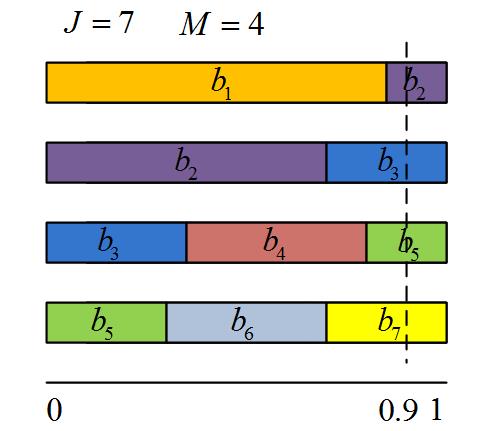}
\caption{\label{fig:system}\textcolor{black}{Probabilistic content placement strategy.}}
\label{sys}
\end{figure}
We consider a  probabilistic caching model where the content is independently stored with the same probability in all SBSs of the same tier (either $\mu$Wave or $\mathrm{mm}$Wave) \cite{Blaszczyszyn_2015}. Let $b_j$ denote the probability that the $j$-th content is cached at a SBS. Fig. \ref{fig:system} shows an example of probabilistic caching with $J=7$ and $M=4$, where the contents $\left\{ f_2, f_3, f_5, f_7 \right\}$ are cached at a SBS by drawing uniformly a random
number which is 0.9 in this example.
In the probabilistic caching strategy, the caching probability \textcolor{black}{${\bm{b}}=\{b_1,...b_j,...b_J\}$} needs to satisfy the following conditions:
\begin{align}\label{Prob_caching}
& \sum\limits_{j{\rm{ = }}1}^J {{b_j}} {\rm{ \leq }}M, \nonumber\\
& 0 \leq b_j \leq 1, \; \forall j.
\end{align}

\textcolor{black}{Note that although the probabilistic caching strategy is used, implementation of it will allow each SBS to always cache the maximum amount of total contents up to its caching capacity $M$.}

\subsection{Downlink Transmission}
In the considered downlink networks, each $\mu$Wave SBS is equipped with $N_\mu$ antennas, and each $\mathrm{mm}$Wave SBS has directional $\mathrm{mm}$Wave antennas. All UEs are single-antenna nodes, \textcolor{black}{in the both $\mu$Wave and $\rm{mm}$Wave, only one single-antenna user is allowed to communicate with the SBS at one time slot\footnote{\textcolor{black}{In dense small cell networks, we assume that the density of users is much higher than the density of $\mu$Wave or $\rm{mm}$Wave SBSs and this can be handled by using multiple access techniques \cite{zhu2016wireless}.}}.} The positions of $\mu$Wave SBSs are modeled by  a homogeneous Poisson point process (HPPP) $\Phi^\mu$ with the density $\lambda_\mu$, and the positions of $\mathrm{mm}$Wave SBSs are modeled by   an independent HPPP $\Phi^\mathrm{mm}$ with the density $\lambda_\mathrm{mm}$. Define ${\Phi^\mu_j}$ and ${\Phi^{\rm{mm}}_j}$ as the point process corresponding to all SBSs that cache the content $j$ in the $\mu$Wave tier and the $\rm{mm}$Wave tier with the density $b_j \lambda_\mu$ and $b_j\lambda_{\rm{mm}}$, respectively.

\subsubsection{$\mu$Wave Tier} In the $\mu$Wave tier, the maximum-ratio transmission  beamforming is adopted at each SBS. All   channels undergo independent identically distributed (i.i.d.) quasi-static Rayleigh block fading. Without loss of generality, when a typical  $\mu$Wave UE  located at the origin $o$ requests the content $j$ from the {  associated $\mu$Wave BS $X_o$} that has cached this content, its received signal-to-interference-plus-noise ratio (SINR) is given by
\begin{align}\label{SINR_muWave}
{\mathrm{SINR}^{\mu}_j} =  \frac{{{P_\mu}{h^{\mu}_j}L\left( {\left| {{X^{\mu}_j }} \right|} \right)}}{{\mathcal{I}^\mu_j + \overline{\mathcal{I}}^\mu_j+ {\sigma^2_\mu}}},
\end{align}
where $P_\mu$ is the transmit power,   $h^{\mu}_j \sim \Gamma\left(N_\mu,1\right)$  is the \textcolor{black}{the equivalent small-scale fading} channel power gain  between the typical $\mu$Wave UE and its serving $\mu$Wave SBS, \textcolor{black}{where $\Gamma(k_1, k_2)$ denotes Gamma distribution, with a shape parameter $k_1$ and a scale parameter $k_2$}. The path loss is $L\left( {\left| {{X^{\mu}_j }} \right|} \right)=\beta_\mu {{\left( \left|X^{\mu}_j\right|\right)  }}^{ - {\alpha_\mu}}$ with the distance $\left| {{X^{\mu}_j }} \right|$, where $\beta_\mu$ is the frequency dependent constant parameter  and $\alpha_\mu$ is the path loss exponent. The $\sigma^2_\mu$ is the noise power at a $\mu$Wave UE. The inter-cell interference $\mathcal{I}^\mu_j $ and $\overline{\mathcal{I}}^\mu_j$ are  given by
\begin{equation} \label{Int_vec_exp_muWave}
\left\{\begin{aligned}
\mathcal{I}^\mu_j &=\sum\nolimits_{i \in {\Phi^\mu_j }\backslash X_o} {{P_{\mu}}
{h_{i,o}}L\left( {\left|X_{i,o}\right|} \right)},\\
\overline{\mathcal{I}}^\mu_j &=\sum\nolimits_{k \in {\overline{\Phi}^\mu_j}} {{P_{\mu}}
{h_{k,o}}L\left( \left|X_{k,o}\right| \right)}.
\end{aligned}\right.
\end{equation}
In \eqref{Int_vec_exp_muWave},   ${\Phi^\mu_j\backslash X_o}$ is the point process with density $b_j\lambda_\mu$  corresponding to the interfering SBSs that cache the content $j$,  and $\overline \Phi _j^\mu  = {\Phi ^\mu } - \Phi _j^\mu $ with density $\left(1-b_j\right)\lambda_\mu$ is the point process corresponding to the interfering SBSs that do not store the content $j$. The $h_{i,o}, h_{k,o} \sim \exp\left(1\right)$ are the interfering channel power gains that follow the exponential distribution, and $\left|X_{i,o}\right|, \left|X_{k,o}\right|$ denote the distances between the interfering SBSs and the typical UE.

\subsubsection{$\mathrm{mm}$Wave Tier} {\color{black}{In the mmWave tier, we assume that the directional beamforming is adopted at each mmWave SBS and small-scale fading is neglected, since small-scale fading has little change in received power as verified by
the practical mmWave channel measurements in \cite{TED2013IEEE_Access}. Note that the traditional small-scale fading distributions are invalid for mmWave modeling due to mmWave sparse scattering environment \cite{el2014spatially}.}} Unlike the conventional $\mu$Wave counterpart, $\mathrm{mm}$Wave transmissions are highly sensitive to the blockage. According to the average line-of-sight (LOS) model in \cite{T_Bai2014,J_Park_2016}, we consider that the $\mathrm{mm}$Wave link is LOS if the communication distance is less than $D_L$, and otherwise it is none-line-of-sight (NLOS). Moreover, the existing literature has confirmed that $\mathrm{mm}$Wave transmissions tend to be noise-limited and interference is weak \cite{T_Bai2014,Singh_2015}. Therefore, when a typical $\mathrm{mm}$Wave UE requests the content $j$ from the {associated} $\mathrm{mm}$Wave SBS that has cached this content, its received SINR is given by
\begin{align}\label{SINR_mmWave}
{\mathrm{SINR}^{\mathrm{mm}}_j} =  \frac{{{P_\mathrm{mm}}G_\mathrm{mm} L\left( {\left| {{Y^{\mathrm{mm}}_j }} \right|} \right)}}{{ {\sigma^2_\mathrm{mm}}}},
\end{align}
where $P_{\rm{mm}}$ is the transmit power of the $\mathrm{mm}$Wave SBS, $G_\mathrm{mm}$ is the main-lobe gain of using direction beamforming  {and equal to number of antenna elements \cite{venugopal2016device}}. The path loss is expressed as  $L\left( {\left| {{Y^{\mathrm{mm}}_j }} \right|} \right)=\beta_{\mathrm{mm}} {{\left( \left|Y^{\mathrm{mm}}_j\right|\right)  }}^{ - {\alpha}}$ with the distance $\left| {{Y^{\mathrm{mm}}_j }} \right|$ and frequency-dependent parameter $\beta_{\rm{mm}}$. The path loss exponent $\alpha=\alpha_\mathrm{L}$ when it is a LOS link and $\alpha=\alpha_\mathrm{N}$ when it is an NLOS link.  The $\sigma^2_\mathrm{mm}$ is the combined power of noise and weak interference \footnote{ $\rm{mm}$Wave in dense networks works in the noise-limited regime, since the high path loss impairs the interference, which could improve the signal directivity \cite{Singh_2015}. {\color{black}In contrast to the sub-6 GHz counterpart which is usually interference-limited, mmWave networks tend to be noise-limited when the BS density is not extremely dense, due to the narrow beam and blocking effects \cite{andrews2017modeling}.} For completeness, we also incorporate weak interference here.}.

\section{Successful Content Delivery Probability}
In this paper, SCDP is used as the performance indicator, which represents the probability that a content requested by a typical UE is both cached in the network and can be successfully transmitted to the UE. {\color{black}{We assume that each content has $\eta$ bits, and the delivery time needs to be less than $T$.}} {\color{black}{ By using the Law of total probability, the SCDP in the $\mu$Wave tier is calculated as}}
\begin{align}\label{SCDP_muWave}
\mathcal{P}_{\mathrm{SCD}}^{\mu}&=\sum\limits_{j{\rm{ = }}1}^J {{a_j}} {\rm{Pr}}\left( {{W_\mu }{{\log }_2}\left( {1{\rm{ + SINR}}_j^\mu } \right) \ge \frac{\eta }{T}} \right) \nonumber\\
&=\sum\limits_{j{\rm{ = }}1}^J {a_j} {\rm{Pr}}\left( \rm{SINR}_j^\mu >\varphi_\mu\right),
\end{align}
where {\color{black}{$W_{\mu}$ is the $\mu$Wave bandwidth allocated to a typical user (frequency-division multiple access (FDMA) is employed when multiple users are served by a SBS in this paper)}}, and $\varphi_\mu=2^{\frac{\eta}{W_\mu T}}-1$. Likewise, in the $\mathrm{mm}$Wave tier, the SCDP is calculated as
\begin{align}\label{SCDP_mmWave}
\mathcal{P}_{\mathrm{SCD}}^{\mathrm{mm}}&=\sum\limits_{j{\rm{ = }}1}^J {{a_j}} {\rm{Pr}}\left( {{W_{\mathrm{mm}} }{{\log }_2}\left( {1{\rm{ + SINR}}_j^{\mathrm{mm}} } \right) \ge \frac{\eta }{T}} \right) \nonumber\\
&=\sum\limits_{j{\rm{ = }}1}^J {a_j} {\rm{Pr}}\left( \rm{SINR}_j^{\mathrm{mm}} >\varphi_{\mathrm{mm}}\right),
\end{align}
where {\color{black}{$W_{\rm{mm}}$ is the $\rm{mm}$Wave bandwidth allocated to a typical user}}, and $\varphi_{\mathrm{mm}}=2^{\frac{\eta}{W_\mathrm{mm} T}}-1$.
The rest of this section is devoted to deriving the SCDPs in \eqref{SCDP_muWave} and \eqref{SCDP_mmWave}.

\subsection{$\mu$Wave Tier}
Based on \eqref{SINR_muWave} and \eqref{SCDP_muWave}, the SCDP in the $\mu$Wave tier can be derived and summarized below.
\begin{theorem}
In the cache-enabled $\mu$Wave tier, the SCDP is given by
\begin{align}\label{muWave_SCDP}
\mathcal{P}_{\mathrm{SCD}}^{\mu}&= \sum\limits_{j{\rm{ = }}1}^J {{a_j}} \mathcal{P}_{j,\mathrm{SCD}}^{\mu}\left(b_j \right),
\end{align}
where ${P}_{j,\mathrm{SCD}}^{\mu}\left(b_j \right)$ denotes the probability that the $j$-th request content is successfully delivered to the $\mu$Wave  UE by its serving SBS, and is expressed as
\begin{align}\label{muWave_STP}
{P}_{j,\mathrm{SCD}}^{\mu}\left(b_j \right)=\int_0^\infty  {P_{\operatorname{cov} }^\mu (x,b_j)}f_{\left| {{X}}_j^{\mu} \right|}(x) dx,
\end{align}
where ${P_{\operatorname{cov} }^\mu (x,b_j)}$ is given by \eqref{cov_pro_muWave} \textcolor{black}{at the top of the this page}, which represents the conditional coverage probability that the received SINR is larger than $\varphi _\mu$ given a typical communication distance $x$. $f_{\left| {{X}}_j^{\mu} \right|}(x)$ is the probability density function (PDF) of the distance $\left| {{X}}_j^{\mu} \right|$ between a typical $\mu$Wave UE and its nearest serving SBS that stores content $j$ , and is given by {\color{black}\cite{jo2012heterogeneous}}
\begin{align}\label{DRSP_PDF_M}
f_{\left| {{X}}_j^{\mu} \right|}(x) = 2\pi b_j {\lambda_\mu }x e^{ { - \pi b_j {\lambda_\mu } {x^2}} }.
\end{align}

\begin{figure*}[!t]
\normalsize
\begin{align}\label{cov_pro_muWave}
& {P_{\operatorname{cov} }^\mu (x,b_j)}=\sum\limits_{n = 0}^{{N_\mu} - 1} {\frac{{{{(x^{\alpha_\mu}) }^n}}}{{n!{{( - 1)}^n}}}\sum_{{\{t_q\}_{q=1}^n\in\Theta}} {\frac{{n!}}{{\prod\limits_{q = 1}^n {{t_q}!} {{(q!)}^{{t_q}}}}}} } \exp \bigg( { - \frac{{{\varphi _\mu}\sigma _\mu^2 x^{{\alpha_\mu}} }}{{{P_\mu} \beta_\mu }}}  { - 2\pi {b_j}{\lambda _\mu} \frac{{ {{{\varphi _\mu}  }} {x^{2 }}  }}{{{\alpha _\mu} - 2}}}
\nonumber\\
& {_2{F_1}\Big[ {1,\frac{{ - 2 + {\alpha_\mu}}}{{{\alpha _\mu}}},2 - \frac{2}{{{\alpha _\mu}}}, -{{{\varphi _\mu} }}  } \Big]} - \frac{{{2\pi ^2}}}{{{\alpha_\mu}}}(1 - {b_j}){\lambda _\mu}{({\varphi _\mu}x^{\alpha_\mu} )^{\frac{2}{{{\alpha _\mu}}}}}\csc\left( {\frac{{2\pi }}{{{\alpha _\mu}}}} \right)\bigg)  \prod\limits_{q = 1}^n {{{\left( {{\mathcal{T}^{\left( q \right)}}(x^{\alpha_\mu})} \right)}^{{t_q}}}},
\end{align}
where  $\Theta\triangleq \{\{t_q\}_{q=1}^n|\sum\limits_{q=1}^{n}q \cdot t_q =n, \mbox{$t_q$ is an integer}, \forall n\}$, $\csc\left(\cdot\right)$ is the Cosecant trigonometry function, and
\begin{align}
\label{Lambda_2}
 {\mathcal{T}_j^{(1)}}(x^{\alpha_\mu}) = &  - \frac{{{\varphi_\mu}\sigma_\mu^2}}{{{P_\mu}\beta }} - 2\pi {{b_j}{\lambda_\mu}}{x^{2 - {\alpha_\mu}}}{\varphi_\mu}  \frac{{{\alpha_\mu} - 2 + 2\left( {1 + {\varphi_\mu}} \right){}_2{F_1}\left[ {1,\frac{{{\alpha_\mu} - 2}}{{{\alpha_\mu}}},2 - \frac{2}{{{\alpha_\mu}}}, - {\varphi_\mu}} \right]}}{{\left( {1 + {\varphi_\mu}} \right)\left( {{\alpha_\mu} - 1} \right){\alpha_\mu}}} \nonumber \\
& - 4{\pi ^2}(1 - {b_j}){\lambda_\mu}{\left( {\varphi_\mu^{\alpha_\mu} x } \right)^{2 - {\alpha_\mu}}}
\csc \left( {\frac{{2\pi }}{{{\alpha_\mu}}}} \right),
\end{align}
\begin{align}
{\mathcal{T}_j^{(q)}}(x^{\alpha_\mu}) = & 2\pi{b_j}{\lambda_\mu}q!{\left( { - 1} \right)^q}{x^{ -({{2 + {\alpha_\mu}}})(1 + q)}}\varphi_\mu^{ - q(1 + q)}  \frac{{{}_2{F_1}\left[ {1 + q,\frac{{2 + {\alpha_\mu}}}{{{\alpha_\mu}}},2 + \frac{2}{{{\alpha_\mu}}}, - \frac{1}{{{\varphi_\mu}}}} \right]}}{{2 + {\alpha_\mu}}} +2\pi (1 - {b_j}){\lambda_\mu}q!{\left( { - 1} \right)^q}\nonumber \\
  &\times {(x^{\alpha_\mu})^{ - q + \frac{2}{{{\alpha_\mu}}}}} \varphi_\mu^{\frac{2}{{{\alpha_\mu}}}}\frac{{\Gamma \left( {q - \frac{2}{{{\alpha_\mu}}}} \right)\Gamma \left( {\frac{{2 + {\alpha_\mu}}}{{{\alpha_\mu}}}} \right)}}{{{\alpha_\mu}\Gamma \left( {1 + q} \right)}}, \;\; q>1.
\end{align}
\hrulefill
\end{figure*}
\end{theorem}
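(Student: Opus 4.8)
The plan is to establish \eqref{muWave_SCDP}–\eqref{cov_pro_muWave} by combining the law of total probability, a conditioning on the serving distance, the finite-sum tail of the Gamma distribution, and Fa\`a di Bruno's formula for the derivatives of a Laplace transform. Starting from the definition \eqref{SCDP_muWave}, $\mathcal{P}_{\mathrm{SCD}}^\mu$ is already written as the request-weighted sum $\sum_j a_j\Pr(\mathrm{SINR}_j^\mu>\varphi_\mu)$, so it suffices to evaluate the per-content term $P_{j,\mathrm{SCD}}^\mu(b_j)$. By independent thinning of $\Phi^\mu$, the SBSs caching content $j$ form a homogeneous PPP of density $b_j\lambda_\mu$, independent of the complementary process $\overline\Phi_j^\mu$ of density $(1-b_j)\lambda_\mu$. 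Conditioning on the distance $x=|X_j^\mu|$ to the nearest caching SBS (the server), the standard nearest-neighbour law of the thinned PPP gives \eqref{DRSP_PDF_M}, and deconditioning yields \eqref{muWave_STP}. What then remains is the conditional coverage probability $P_{\mathrm{cov}}^\mu(x,b_j)=\Pr(\mathrm{SINR}_j^\mu>\varphi_\mu\mid|X_j^\mu|=x)$.

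Next I would exploit that the desired-signal gain is Gamma distributed with integer shape, $h_j^\mu\sim\Gamma(N_\mu,1)$. Writing $\theta=\varphi_\mu x^{\alpha_\mu}/(P_\mu\beta_\mu)$ and $I_{\mathrm{tot}}=\mathcal{I}_j^\mu+\overline{\mathcal{I}}_j^\mu+\sigma_\mu^2$, coverage is the event $h_j^\mu>\theta I_{\mathrm{tot}}$. Substituting the finite-sum tail $\Pr(h>g)=e^{-g}\sum_{n=0}^{N_\mu-1}g^n/n!$ and averaging over $I_{\mathrm{tot}}$ gives $P_{\mathrm{cov}}^\mu=\sum_{n=0}^{N_\mu-1}\frac{\theta^n}{n!}\,\E[I_{\mathrm{tot}}^n e^{-\theta I_{\mathrm{tot}}}]$. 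I would then invoke the identity $\E[I^n e^{-\theta I}]=(-1)^n\mathcal{L}_{I_{\mathrm{tot}}}^{(n)}(\theta)$, turning $P_{\mathrm{cov}}^\mu$ into a finite combination of derivatives of the Laplace transform. Since $\theta$ is proportional to $u:=x^{\alpha_\mu}$, the scaling (Euler-operator) identity $\theta^n\,d^n/d\theta^n=u^n\,d^n/du^n$ converts the $\theta^n$ weight into the $(x^{\alpha_\mu})^n$ prefactor of \eqref{cov_pro_muWave}.

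I would then compute $\mathcal{L}_{I_{\mathrm{tot}}}$ explicitly. By independence it factorizes into the noise factor $e^{-\theta\sigma_\mu^2}$ and the two PPP Laplace transforms. Averaging the $\exp(1)$ interferer gains and applying the PPP probability generating functional, the caching-SBS contribution integrates over $r>x$ (no caching SBS lies inside $x$) and evaluates to the Gauss hypergeometric term, while the non-caching contribution integrates over the whole plane $r>0$ and evaluates, via the standard identity $\int_0^\infty u^{p-1}/(u+A)\,du=\pi A^{p-1}\csc(\pi p)$ with $p=2/\alpha_\mu$, to the cosecant term. This reproduces exactly the exponent of the $\exp(\cdot)$ factor in \eqref{cov_pro_muWave}. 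Writing $\mathcal{L}_{I_{\mathrm{tot}}}=\exp(g)$ and applying Fa\`a di Bruno's formula, the $n$-th derivative equals $\exp(g)$ times the complete Bell polynomial in $g',\dots,g^{(n)}$, whose partition expansion is precisely the combinatorial sum over $\Theta$ with multinomial weights $n!/(\prod_q t_q!\,(q!)^{t_q})$. Identifying $\mathcal{T}^{(q)}$ with the $q$-th derivative of $g$, and differentiating each radial integral $q$ times under the integral sign — which raises the denominator exponent to $q+1$ and hence shifts the leading parameter of the resulting ${}_2F_1$ to $1+q$ — produces \eqref{Lambda_2} and the $q>1$ expression.

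The main obstacle is this final step: carrying the repeated differentiation through the hypergeometric and cosecant integrals, reducing each $\mathcal{T}^{(q)}$ to closed form, and assembling the pieces correctly through the Bell-polynomial partition sum. The constant bookkeeping, the reflection-formula manipulations that turn the cosecant derivatives into the $\Gamma(q-2/\alpha_\mu)$ factors of the $q>1$ term, and the justification for interchanging differentiation and integration are where essentially all the technical effort concentrates; the probabilistic scaffolding in the earlier steps is routine.
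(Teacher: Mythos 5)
Your proposal follows essentially the same route as the paper's Appendix A: the Gamma finite-sum tail, the recognition of the $n$-th moment weighting as an $n$-th derivative in $\nu=x^{\alpha_\mu}$ of the product of the noise factor and the two PPP Laplace transforms (your $\theta\to u$ rescaling is equivalent to the paper's identity (A.3)), the PGFL evaluations over $r>x$ and $r>0$, and the Fa\`a di Bruno/Bell-polynomial expansion that the paper compresses into ``after some manipulations.'' The argument is correct and, if anything, more explicit than the paper about where the partition sum over $\Theta$ and the $\mathcal{T}^{(q)}$ terms come from.
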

\begin{proof}
Please see Appendix A.
\end{proof}
Note that  ${P}_{j,\mathrm{SCD}}^{\mu}\left(b_j \right)$ becomes the probability of successful transmission from the serving SBS to the typical user when $b_j$=1 in traditional  $\mu$Wave networks without caching. {\color{black}{We see that the SCDP expression for multi-antenna systems is much complicated, compared to the closed-form expression for single-antenna systems in \cite{wen2016cache}.}}

\subsection{$\mathrm{mm}$Wave Tier}
Based on \eqref{SINR_mmWave} and \eqref{SCDP_mmWave}, the SCDP in the $\mathrm{mm}$Wave tier can be derived and summarized below.
\begin{theorem}
In the cache-enabled $\mathrm{mm}$Wave tier, the SCDP is given by
\begin{align}\label{mmWave_SCDP}
\mathcal{P}_{\mathrm{SCD}}^{\rm{mm}}=\sum\limits_{j{\rm{ = }}1}^J {{a_j}} \mathcal{P}_{j,\mathrm{SCD}}^{\rm{mm,L}}(b_j)+\sum\limits_{j{\rm{ = }}1}^J {{a_j}} \mathcal{P}_{j,\mathrm{SCD}}^{\rm{mm,N}}(b_j),
\end{align}
where $\mathcal{P}_{j,\mathrm{SCD}}^{\rm{mm,L}}(b_j)$ and $\mathcal{P}_{j,\mathrm{SCD}}^{\rm{mm,N}}(b_j)$ denote that probabilities that the content $j$ is successfully delivered when the $\mathrm{mm}$Wave UE is connected to its serving $\mathrm{mm}$Wave SBS via LOS link and NLOS link,   and are given by
\begin{align}\label{P_m_los_closeform}
\mathcal{P}_{j,\mathrm{SCD}}^{\rm{mm,L}}(b_j) = 1- e^{ - \left( \min{\left( D_{\text{L}},{d_{\rm{L}}} \right)} \right)^2 \pi b_j \lambda_{\rm{mm}} },
\end{align}
and
\begin{align}\label{P_m_nlos_closeform}
 \mathcal{P}_{j,\mathrm{SCD}}^{\rm{mm,N}}(b_j)= e^{ - {D_\text{L}^2}\pi b_j \lambda_{\rm{mm}} }- e^{ - \left(\max{\left( {D_{\text{L}},{d_{\rm{N}}}}\right)}\right)^2 \pi b_j {\lambda_{\rm{mm}}}},
\end{align}
respectively, where $ d_\text{L}= {\left( {\frac{{P_\mathrm{mm} G_\mathrm{mm}\beta_\mathrm{mm} }}{{{\varphi _\mathrm{mm}}\sigma _\mathrm{mm}^2}}} \right)}^{ \frac{1}{\alpha _{\text{L}}}} $ and $ d_\text{N}= {\left( {\frac{{P_\mathrm{mm} G_\mathrm{mm}\beta_\mathrm{mm} }}{{{\varphi_\mathrm{mm}}\sigma _\mathrm{mm}^2}}} \right)}^{ \frac{1}{\alpha _{\text{N}}}} $.

\end{theorem}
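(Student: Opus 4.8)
The plan is to exploit the fact that, in the $\mathrm{mm}$Wave tier, small-scale fading is neglected, so that conditioned on the serving distance the SINR in \eqref{SINR_mmWave} is a deterministic function of distance and the coverage event collapses to a simple distance threshold. First I would invert the constraint $\mathrm{SINR}_j^{\mathrm{mm}} > \varphi_{\mathrm{mm}}$ separately for each propagation state. Substituting the LOS exponent $\alpha_\mathrm{L}$ into $L(\cdot)$ and solving $P_\mathrm{mm} G_\mathrm{mm} \beta_\mathrm{mm} r^{-\alpha_\mathrm{L}}/\sigma_\mathrm{mm}^2 > \varphi_\mathrm{mm}$ yields the critical radius $d_\mathrm{L}$; repeating with $\alpha_\mathrm{N}$ yields $d_\mathrm{N}$. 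Hence coverage on a LOS (resp.\ NLOS) link is exactly the event that the serving distance is smaller than $d_\mathrm{L}$ (resp.\ $d_\mathrm{N}$).

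Next I would pin down the relevant point process and its nearest-neighbour law. Since each $\mathrm{mm}$Wave SBS independently caches content $j$ with probability $b_j$, the thinning property of a PPP implies that the SBSs storing content $j$ form a PPP $\Phi_j^\mathrm{mm}$ of density $b_j \lambda_\mathrm{mm}$, and the typical UE associates with the nearest such SBS. The serving distance $|Y_j^\mathrm{mm}|$ therefore has the Rayleigh-type PDF $f(r) = 2\pi b_j \lambda_\mathrm{mm} r\, e^{-\pi b_j \lambda_\mathrm{mm} r^2}$, with complementary CDF given by the PPP void probability $\Pr(|Y_j^\mathrm{mm}| > \rho) = e^{-\pi b_j \lambda_\mathrm{mm} \rho^2}$.

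The key step is to intersect the coverage region with the LOS/NLOS classification, which is itself governed by the same distance through the threshold $D_\mathrm{L}$. A successful LOS delivery requires simultaneously $r < D_\mathrm{L}$ (so the link is LOS) and $r < d_\mathrm{L}$ (so the SINR constraint holds), i.e.\ $r < \min(D_\mathrm{L}, d_\mathrm{L})$; integrating $f(r)$ over $[0,\min(D_\mathrm{L},d_\mathrm{L}))$ gives the complementary void probability in \eqref{P_m_los_closeform}. A successful NLOS delivery requires $r \ge D_\mathrm{L}$ together with $r < d_\mathrm{N}$, i.e.\ $r$ lies in the annulus $[D_\mathrm{L}, d_\mathrm{N})$; integrating $f(r)$ over this annulus produces the difference of two void probabilities appearing in \eqref{P_m_nlos_closeform}.

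The one point needing care --- and the only real obstacle --- is the degenerate case in which the SINR-feasible NLOS radius falls inside the LOS region, i.e.\ $d_\mathrm{N} < D_\mathrm{L}$, so that the NLOS annulus is empty and no NLOS delivery is possible. This is precisely what the $\max(D_\mathrm{L}, d_\mathrm{N})$ in \eqref{P_m_nlos_closeform} encodes: when $d_\mathrm{N} < D_\mathrm{L}$ the two exponentials coincide and the NLOS term vanishes, whereas when $d_\mathrm{N} \ge D_\mathrm{L}$ it reduces to the honest annulus integral. Finally I would weight the per-content LOS and NLOS success probabilities by the request probabilities $a_j$ and sum over $j$, as prescribed by \eqref{SCDP_mmWave}, thereby recovering \eqref{mmWave_SCDP} and completing the proof.
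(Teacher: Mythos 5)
Your proposal is correct and follows essentially the same route as the paper's Appendix B: invert the noise-limited, fading-free SINR condition into the distance thresholds $d_{\mathrm{L}}$ and $d_{\mathrm{N}}$, use the thinned-PPP nearest-neighbour (void-probability) law for the serving distance, and integrate over $[0,\min(D_{\mathrm{L}},d_{\mathrm{L}}))$ and $[D_{\mathrm{L}},\max(D_{\mathrm{L}},d_{\mathrm{N}}))$ respectively, with the $\min$/$\max$ capturing the degenerate cases exactly as the paper's indicator functions do. No gaps.
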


\begin{proof}
Please see Appendix B.
\end{proof}

\section{Optimization of Probabilistic Content Placement}

In this section,  we aim to maximize the SCDP by optimizing the probabilistic content placement $\{b_j\}$. The main difficulty is that the SCDP expressions \eqref{muWave_SCDP} and \eqref{mmWave_SCDP} do not have a closed form  for the multi-antenna case and whether they are concave with regard to  $\{b_j\}$ is unknown, which is much more challenging than the single-antenna SBS case studied in \cite{wen2016cache}. {\color{black}{Therefore, the optimal content placement problem for the multi-antenna case is distinct. To tackle this new problem, here we propose two algorithms,}} the first one is developed  based on the  CEO method that can achieve near-optimal performance, and the other two-stair scheme is based on the combination of MPC and CD content placement schemes with reduced complexity.

\subsection{The Near-Optimal CCEO Algorithm}
\textcolor{black}{ The optimal caching placement probability in the multi-antenna case is hard to achieve, so we introduce CEO to resolve the difficulty of maximizing the SCDP by optimizing the probabilistic content placement. CEO is an adaptive variance algorithm for estimating probabilities of rare events. The rationale of the CEO algorithm is to first associate with each optimization problem a rare event estimation problem, and then to tackle this estimation problem efficiently by an adaptive algorithm. The  outcome of this algorithm is the construction of a random sequence of solutions which converges probabilistically to the optimal or near-optimal solution \cite{CEO,botev2013cross}. The CEO method involves   two iterative steps. The first one is to generate samples of random data   according to a specified random (normally Gaussian) distribution. And the second step updates the parameters of the random distribution, based on the sample data  to produce better samples in the next iteration.}
\textcolor{black}{
The CEO algorithm has been successfully applied to a wide range of difficult optimization tasks such as traveling salesman problem and antenna selection problem in multi-antenna communications \cite{CEO-antenna}. It has shown superior performance in solving complex  optimization problems compared to commonly used simulated annealing (SA) and genetic algorithm (GA) \cite{caserta2009cross} that are based on random search.}

\textcolor{black}{
 The original principle of the CEO algorithm was proposed for unconstrained optimization. To deal with the constraints on the probabilities $\{b_j\}$ and the content capacity constraint, we propose a CCEO algorithm as shown in Algorithm \ref{alg:A}. In the proposed CCEO algorithm, we force the randomly generated samples to be within the feasible set $\{b_j|0 \leq b_j \leq 1, \; \forall j\}$ in the Project step. To satisfy the constraint of $\sum_j^J b_j\le M$, we introduce a penalty function $H\left(\sum_j^J b_j- M\right)$ to the original objective function in the Modification step, where $H$ is a large positive number  that represents the parameter for the penalty function. The dynamic Smoothing step will prevent the result from converging to a sub-optimal solution. It can be seen that at each iteration, the main computation is to evaluate the objective functions  for $N_s$ times and no gradient needs to be calculated, so the complexity is moderate and can be further controlled to achieve a complexity-convergence tradeoff.}
\textcolor{black}{
\begin{algorithm}
\caption{Constrained Cross-Entropy Optimization (CCEO) Algorithm}
\label{alg:A}
\begin{algorithmic}
\STATE  \textbf{Initialization}: Randomly initialize the parameters of Gaussian distribution $\mathcal{N}(\mu_{j,t=0}, \sigma^2_{j, t=0})$ where $t = 0$ is the iteration index. Set sample number $N_{s}$, the number of selected samples $N^{elite} \ll N_s$   the stopping threshold $\epsilon$ and a large positive number $H$ as the parameter for the penalty function.
\REPEAT
\STATE  \textbf{Sampling:}   Generate $N_{s}$ random samples $\bm{b} = \{\bm{b}_{1},\bm{b}_{2},..\bm{b}_{j},...\bm{b}_{N_s} \}$ from the $ \mathcal{N}(\mu_{t}, \sigma^2_{t})$ distribution.
\STATE  \textbf{Projection:}   Project the samples onto the feasible set $\{b_j|0 \leq b_j \leq 1, \; \forall j\}$, i.e., $\bm{b}  = \min(\max(\bm{b}, 0),1)$.
\STATE\textbf{Modification:} We modify the objective function to the following:
\begin{align}
\widehat{\mathcal{P}}_{\text{SCD}}(\bm{b})=\mathcal{P}_{\text{SCD}}(\bm{b})-  H \max(\sum\nolimits_{j = 1}^J {{b_j}} -M,0),
\end{align}
where $\mathcal{P}_{\text{SCD}}(\bm{b})$ is the original objective function in \eqref{muWave_SCDP} and \eqref{mmWave_SCDP} for $\mu$Wave and mmWave, respectively.
\STATE\textbf{Selection:}  Evaluate $\widehat{\mathcal{P}}_{\text{SCD}}(\bm{b})$ for $N_s$ samples $\bm{b}$. Let $\mathcal I$ be the indices of the $N^{elite}$ selected best performing samples with $\widehat{\mathcal{P}}_{\text{SCD}}(\bm{b})$.
\STATE \textbf{Updating:} for all $j \in \mathcal F$, calculate the sampling mean and variance:
\begin{align}
{\widetilde \mu _{ij}} = \sum\limits_{i \in \mathcal I} {{b_{ij}}/{N^{elite}}}
\end{align}
\begin{align}
{\widetilde {{\sigma }}^2_{ij}} = \sum\limits_{i \in \mathcal I} {{{\left( {{b_{ij}} - {{\widetilde \mu }_{ij}}} \right)}^2}/{N^{elite}}}.
\end{align}
\STATE \textbf{Smoothing:}
The Gaussian distribution parameters are updated as follows,
\begin{align}\label{mu}
{\bm{\mu} _t} = \iota {\widetilde {\bm{\mu}} _t} + (1 - \iota ){\bm{\mu} _{t - 1}},
\end{align}
\begin{align}\label{sigma}
{\bm{\sigma} }^2_t = \beta_t {\widetilde {{\bm{\sigma} }}^2_t} + (1 - \beta_t )\bm{\sigma}^2_{t - 1}.
\end{align}
In particular, $\alpha$ is a fixed  smoothing parameter ($0.5 \le \alpha \le 0.9$) while $\beta_t$ is a dynamic smoothing parameter given by
\begin{equation}
    \beta_t = \beta - \beta \left(1-\frac{1}{t}\right)^q,
\end{equation}
where $\beta$ is a fixed  smoothing parameter ($0.8\le \beta \le 0.99$), and $q$ is an integer with a typical value between 5 and 10.
\STATE \textbf{Increment:} $t=t+1$.
\UNTIL{A convergence criterion is satisfied, e.g., $\mathop{\max}\limits_{j \in \mathcal F} (\bm{\sigma}^2_{t}) < \epsilon$}
\STATE \textbf{Output:}  The optimal caching probability is $\bm{b}^{\ast}={\bm{\mu} _t}$.
\end{algorithmic}
\end{algorithm}}

\begin{figure}
\centering
\includegraphics[width=3.6in, height=3 in]{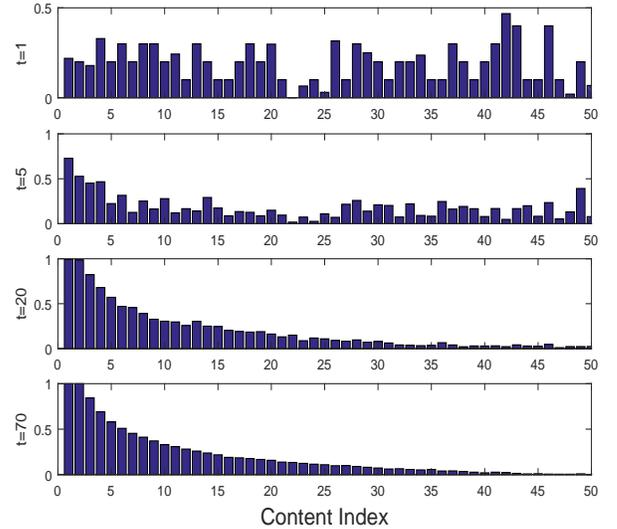}
\caption{\label{Fig_Traning}Evolution of the content placement probabilities in the CCEO algorithm with parameters $\gamma=1.5$, $J=50$, $M=10$.}
\end{figure}

In Fig. 2, we provide an example of the iterative results of content placement probabilities with iteration indices $t=1$, $t=5$, $t=20$, and $t=70$. In this example, the algorithm converges when $t=70$. Each sub-figure presents the resulting mean value of ${\bm{\mu}}_t$ at the end of iteration $t$, and it will help to generate random samples in next iteration.  We can  observe that when $t=20$, the caching placement probability is quite close to the converged solution, which could significantly reduce  the complexity. {\color{black}{Overall the CEO algorithm converges fast and is an efficient method to find the near-optimal SCDP result, and the complexity of the CEO algorithm is $\mathcal{O}\left(n^{3}\right)$~\cite{rubinstein1999cross}.}} It is also noted that the top ranked   contents are cached with probability $b_j=1$, while to make effective use of the rest caching space, caching diversity is more important. Based on this observation, we design a low-complexity heuristic scheme in the next subsections.

\subsection{Two-Stair Scheme for the $\mu$Wave Tier}

\begin{figure}[h]
\centering
\includegraphics[width=1.5 in]{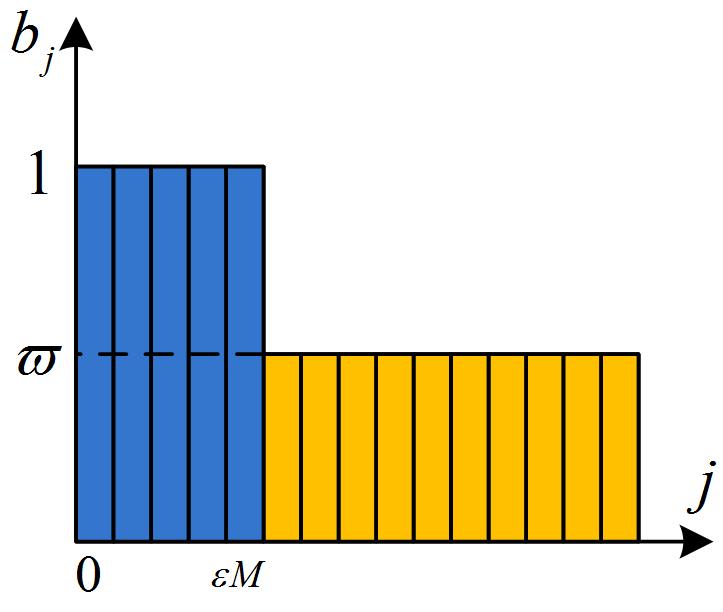}
\caption{\label{figure:bj}\textcolor{black}{Two-Stair probabilistic content placement strategies}.}
\label{sys}
\end{figure}
\textcolor{black}{
To further reduce the complexity of the optimization, we devise a simple two-stair (TS) scheme, when the content popularity is modeled as the {Zipf} distribution \cite{chen2016cooperative,Zipf,Blaszczyszyn_2015} based on empirical studies, which is given by
\begin{align}
{a}_j = {j^{-\gamma} }/\sum\nolimits_{m= 1}^J {{m^{-\gamma} }} ,
\end{align}
where $\gamma$ is the Zipf exponent that represents the popularity skewness.}

In the TS scheme, a fraction of caching space $\varepsilon M$ ($0\leq \varepsilon \leq 1$)  at a SBS is allocated to store the most popular contents which is called the MPC region. The remaining cache space is allocated to randomly store the contents with certain probabilities and is called the CD region. As illustrated in Fig. \ref{figure:bj}, in the `Two-Stair' caching scheme, the contents in the CD region are cached with a common probability $\varpi$. The rest of the contents are not cached and must be fetched through the backhaul links. These content placement schemes will be studied in detail in the rest of this section.

In this scheme, the content placement probabilities  $\left\{b_j\right\}$ need to satisfy the following conditions:
 \begin{align}\label{CD_prob}
\left\{ \begin{gathered}
 b_{1}  =  \ldots  =  b_{\left\lfloor {\varepsilon M} \right\rfloor    }  = 1 , \hfill\\
  {b_{\left\lfloor {\varepsilon M} \right\rfloor  + 1}} =  \ldots  = {b_{\left\lfloor {\varepsilon M} \right\rfloor  + \left\lfloor {\frac{{M - \left\lfloor {\varepsilon M} \right\rfloor }}{\varpi }} \right\rfloor }} = \varpi , \hfill \\
  {b_{\left\lfloor {\varepsilon M} \right\rfloor  + \left\lfloor {\frac{{M - \left\lfloor {\varepsilon M} \right\rfloor }}{\varpi }} \right\rfloor  + 1}} =  \ldots  = {b_J} = 0, \hfill
\end{gathered}  \right.
\end{align}
which are characterized by two variables $\varepsilon$ and $\varpi$,
where $\varpi$ denotes the common probability value that content $j$   in the CD region is stored at a SBS.


As such, the $\mu$Wave SCDP \eqref{muWave_SCDP} can be   expressed as
\begin{align}\label{SCDP_prob_muWave}
\mathcal{P}_{\mathrm{SCD}}^{\mu}&= \sum\limits_{j{\rm{ = }}1}^{\left\lfloor {\varepsilon M} \right\rfloor} {{a_j}} \mathcal{P}_{j,\mathrm{SCD}}^{\mu}\left(1 \right) \nonumber\\
&+\sum\limits_{j{\rm{ = }}\left\lfloor {\varepsilon M} \right\rfloor+1}^{\left\lfloor {\varepsilon M} \right\rfloor+\left\lfloor \frac{{ M- \left\lfloor {\varepsilon M} \right\rfloor}}{\varpi} \right\rfloor} {{a_j}} \mathcal{P}_{j,\mathrm{SCD}}^{\mu}\left(\varpi \right).
\end{align}
It is seen in \eqref{SCDP_prob_muWave} that contents $\{1,\cdots,\left\lfloor {\varepsilon M} \right\rfloor\}$ have the same SCDP $\mathcal{P}_{j,\mathrm{SCD}}^{\mu}\left(1\right)$, and contents $\left\{\left\lfloor {\varepsilon M} \right\rfloor+1,\cdots,\left\lfloor {\varepsilon M} \right\rfloor+\left\lfloor \frac{{ M- \left\lfloor {\varepsilon M} \right\rfloor}}{\varpi} \right\rfloor\right\}$ have the same SCDP $\mathcal{P}_{j,\mathrm{SCD}}^{\mu}\left(\varpi \right)$.
Our aim is to maximize the overall SCDP, and the problem is formulated as
\begin{align}\label{Problem_formulation_muWave}
 &\mathop{\max }\limits_{\varepsilon,\varpi} \mathcal{P}_{\mathrm{SCD}}^{\mu} \mbox{~~in \eqref{SCDP_prob_muWave}} \quad  \nonumber \\
&\mathop{\rm{s.t.}} \;\;~\mathrm{C1:}~0 \leq \varepsilon \leq 1, \nonumber \\
&\;\;\qquad \mathrm{C2:}~ 0 \leq \varpi  \leq 1, \nonumber \\ 
&\;\;\qquad \mathrm{C3:}~\mathbf{1}\left(\varepsilon=1\right) \varpi=0,
\end{align}
where $\mathbf{1}\left(A\right)$ is the indicator function that returns one if the condition A is satisfied.
The convexity of the problem \eqref{Problem_formulation_muWave} is unknown, and finding its global optimal solution is challenging. To obtain an efficient caching placement solution, we first use the following approximations \cite{taghizadeh2013distributed}
\begin{align}\label{approx_populairty}
& \sum\limits_{j{\rm{ = }}1}^{\left\lfloor {\varepsilon M} \right\rfloor} {{a_j}} \approx
\frac{{{{\left( {\varepsilon M} \right)}^{1 - \gamma }} - 1}}{{{J^{1 - \gamma }} - 1}},
\end{align}
\begin{align}\label{approx_populairty_1}
&\sum\limits_{j{\rm{ = }}\left\lfloor {\varepsilon M} \right\rfloor+1}^{\left\lfloor {\varepsilon M} \right\rfloor+\left\lfloor \frac{{ M- \left\lfloor {\varepsilon M} \right\rfloor}}{\varpi} \right\rfloor} {{a_j}} \approx
  \frac{{{{\left( {\varepsilon M + \frac{{M(1 - \varepsilon )}}{\varpi }} \right)}^{1 - \gamma }} - 1}}{{{J^{1 - \gamma }} - 1}} - \frac{{{{\left( {\varepsilon M} \right)}^{1 - \gamma }} - 1}}{{{J^{1 - \gamma }} - 1}} \nonumber \\
&   = \frac{{{M^{1 - \gamma }}}}{{{J^{1 - \gamma }} - 1}}\left[ {{{\left( {\varepsilon  + \frac{{(1 - \varepsilon )}}{\varpi }} \right)}^{1 - \gamma }} - {\varepsilon ^{1 - \gamma }}} \right] ,
\end{align}
respectively, based on the fact that for Zipf popularity with $0<\gamma$, $\gamma \neq 1$ and $M \ll J$, we have $\sum\nolimits_{j = 1}^M {{j^{ - \gamma }}} /\sum\nolimits_{m = 1}^J {{m^{ - \gamma }}}  \approx \left( {{M^{1 - \gamma }}-1} \right)/\left( {{J^{1 - \gamma }}-1} \right)$~\cite{taghizadeh2013distributed}. Therefore, the objective function of \eqref{SCDP_prob_muWave} can be  approximated as
\begin{align}\label{obj_approx_muWave}
& {\widetilde{\mathcal{P}}}_{\mathrm{SCD}}^{\mu}\approx
  \mathcal{P}_{j,{\text{SCD}}}^\mu \left( 1 \right)\frac{{{M^{1 - \gamma }}}}{{{J^{1 - \gamma }} - 1}}{\varepsilon ^{1 - \gamma }} - \frac{{\mathcal{P}_{j,{\text{SCD}}}^\mu \left( 1 \right)}}{{{J^{1 - \gamma }} - 1}} \nonumber \\
  & + \mathcal{P}_{j,{\text{SCD}}}^\mu \left( \varpi  \right)\frac{{{M^{1 - \gamma }}}}{{{J^{1 - \gamma }} - 1}}\left[ {{{\left( {\varepsilon  + \frac{{(1 - \varepsilon )}}{\varpi }} \right)}^{1 - \gamma }} - {\varepsilon ^{1 - \gamma }}} \right]  .
\end{align}
Note that for the special case of MPC caching, i.e., $\varepsilon=1,\varpi=0$, the above reduces to ${\widetilde{\mathcal{P}}}_{\mathrm{SCD}}^{\mu}\approx \mathcal{P}_{j,\mathrm{SCD}}^{\mu}\left(1\right) \frac{{{M^{1 - \gamma }} - 1}}{{{J^{1 - \gamma }} - 1}}$.

Then the problem \eqref{Problem_formulation_muWave} can be approximated as
\begin{align}\label{Problem_formulation_muWave_relaxation}
 &\mathop{\max }\limits_{\varepsilon,\varpi} ~{\widetilde{\mathcal{P}}}_{\mathrm{SCD}}^{\mu} \quad  \nonumber \\
&\mathop{\rm{s.t.}}  ~~\mathrm{C1} \mbox{ --  }\mathrm{C3}.
\end{align}
Because $\varepsilon$ and $\varpi$ are coupled in the objective function of \eqref{Problem_formulation_muWave_relaxation}, we use a decomposition approach to solve this problem. Since ${{M^{1 - \gamma }}}$ is always positive, given $\varpi$, the optimal $\varepsilon$ is obtained by solving the following equivalent sub-problem:
\begin{align}\label{Problem_formulation_muWave_sub}
&\mathop{\max }\limits_{0 \leq \varepsilon \leq 1 } \frac{1}{{{J^{1 - \gamma }} - 1}}\left[ {\left( {\ell _o^\mu  - 1} \right){\varepsilon ^{1 - \gamma }} + {{\left( {\varepsilon  + \frac{{(1 - \varepsilon )}}{\varpi }} \right)}^{1 - \gamma }} - \ell _o^\mu } \right]
\end{align}
where $\ell_o^{\mu}=\frac{\mathcal{P}_{j,\mathrm{SCD}}^{\mu}\left(1 \right)}{\mathcal{P}_{j,\mathrm{SCD}}^{\mu}\left(\varpi\right)}\ge 1$ is independent of $\varepsilon$. Thus, we have the following theorem:
\begin{theorem}
The optimal solution of the problem \eqref{Problem_formulation_muWave_sub} is given by
\begin{align}\label{optimal_P2_1}
\varepsilon^*=\min(\max({\varepsilon_o},0),1),
\end{align}
where $\varepsilon_o={\left( {\left( {{{\left( {\frac{{{\ell _o^{\mu}} - 1}}
{{{\varpi ^{ - 1}} - 1}}} \right)}^{ - 1/\gamma }} - 1} \right)\varpi  + 1} \right)^{ - 1}}$.
\end{theorem}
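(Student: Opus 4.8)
The plan is to solve the one-dimensional problem \eqref{Problem_formulation_muWave_sub} by the standard recipe: show the objective is concave in $\varepsilon$, locate its unique stationary point from the first-order condition, and then account for the box constraint $0\le\varepsilon\le 1$ by projection. I would first discard the additive constant $-\ell_o^\mu$, which does not move the maximizer, and write the objective as the prefactor $1/(J^{1-\gamma}-1)$ times
\[
g(\varepsilon)=\left(\ell_o^\mu-1\right)\varepsilon^{1-\gamma}+\left(\varepsilon+\frac{1-\varepsilon}{\varpi}\right)^{1-\gamma},
\]
keeping in mind that this prefactor carries the sign of $1-\gamma$.

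Next I would establish concavity of the whole objective on $[0,1]$, which is the step that needs care because it must hold for both $0<\gamma<1$ and $\gamma>1$. When $0<\gamma<1$, the exponent $1-\gamma\in(0,1)$ makes $t\mapsto t^{1-\gamma}$ concave; since the inner map $\varepsilon\mapsto\varepsilon+\tfrac{1-\varepsilon}{\varpi}$ is affine and stays positive on $[0,1]$ (it lies between $1$ and $1/\varpi$), both terms of $g$ are concave, and because $\ell_o^\mu\ge 1$ and $J^{1-\gamma}-1>0$ the objective is concave. When $\gamma>1$, the exponent is negative, so $t\mapsto t^{1-\gamma}$ is convex and hence $g$ is convex, but now $J^{1-\gamma}-1<0$, so the product is again concave. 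Thus in either regime the objective is concave and any stationary point is the global unconstrained maximizer.

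I would then impose the first-order condition. Differentiating and using $\tfrac{d}{d\varepsilon}\big(\varepsilon+\tfrac{1-\varepsilon}{\varpi}\big)=-(\varpi^{-1}-1)$, the common nonzero factor $(1-\gamma)$ (and the nonzero prefactor) cancels, leaving
\[
\left(\ell_o^\mu-1\right)\varepsilon^{-\gamma}=\left(\varpi^{-1}-1\right)\left(\varepsilon+\frac{1-\varepsilon}{\varpi}\right)^{-\gamma}.
\]
Dividing through by $\varepsilon^{-\gamma}$, raising both sides to the power $-1/\gamma$ (both sides are nonnegative since $\ell_o^\mu\ge 1$ and $\varpi\le 1$), and solving the resulting linear equation in $\varepsilon$ yields exactly
\[
\varepsilon_o=\left(\left(\left(\frac{\ell_o^\mu-1}{\varpi^{-1}-1}\right)^{-1/\gamma}-1\right)\varpi+1\right)^{-1},
\]
the claimed stationary point.

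Finally, because the objective is concave with the unique unconstrained maximizer $\varepsilon_o$, the maximizer constrained to $[0,1]$ is obtained by clipping, giving $\varepsilon^*=\min(\max(\varepsilon_o,0),1)$. I expect the main obstacle to be the uniform concavity argument across the two regimes of $\gamma$, i.e.\ keeping the sign of $J^{1-\gamma}-1$ synchronized with the convexity/concavity of the power functions; this is precisely what legitimizes both the claim that the first-order condition delivers a maximum and the clipping step that enforces the box constraint.
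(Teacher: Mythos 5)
Your proposal is correct and follows essentially the same route as the paper's Appendix C: establish concavity of the objective on $[0,1]$ (the paper does this by computing the second derivative explicitly and noting $\tfrac{1-\gamma}{J^{1-\gamma}-1}>0$, which is exactly the sign synchronization you invoke via composition rules), solve the first-order condition for the unique stationary point $\varepsilon_o$, and project onto the box constraint. The only cosmetic difference is that the paper phrases the final clipping as a two-case analysis ($0\le\varepsilon_o<1$ versus $\varepsilon_o\ge 1$) rather than as a one-line projection.
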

\begin{proof}
Please see Appendix C. For $\varepsilon_o$ to be in the range of $[0,1]$, $\varpi$ should satisfy $0\le \varpi\le \frac{1}{\ell _o^{\mu}}.$
\end{proof}
Consequently, the problem \eqref{Problem_formulation_muWave_relaxation} reduces to the following optimization problem about $\varpi$ only:
\begin{align}\label{Problem_formulation_muWave_sub2}
&\mathop{\max }\limits_{\varepsilon=\varepsilon_o(\varpi),0 \le \varpi \le \frac{1}{\ell _o^{\mu}}}  {\widetilde{\mathcal{P}}}_{\mathrm{SCD}}^{\mu}. 
\end{align}
Since the problem \eqref{Problem_formulation_muWave_sub2} is non-convex, we propose to use Newton's method to solve it, which is shown in the Appendix D. {\color{black}{Note that the Newton's method converges faster than the Karush-Kuhn-Tucker (KKT) method and the gradient-based method~\cite{yu2013multicell}.}} Suppose the obtained solution is $\hat{\varpi}$, then the optimal $\hat{\varpi}^*$ is $\min(\max({\hat{\varpi}},0),1)$, and the optimal $\varepsilon^*$ can be obtained from \eqref{optimal_P2_1}.
%

\subsection{Two-Stair Scheme for the $\mathrm{mm}$Wave Tier}
Similar to the $\mu$Wave case, the SCDP of the $\mathrm{mm}$Wave tier can be approximated by
 \begin{align}\label{obj_mmWave_caching}
&\mathcal{P}_{\mathrm{SCD}}^{\mathrm{mm}}\approx\ \left( {\mathcal{P}_{j,{\mathrm{SCD}}}^{{\mathrm{mm}},{\text{L}}}(1) + \mathcal{P}_{j,{\mathrm{SCD}}}^{{\mathrm{mm}},{\text{N}}}(1)} \right)\frac{{ {{(\varepsilon M)}^{1 - \gamma }} - 1}}{{{J^{1 - \gamma }} - 1}} \nonumber \\
&  + \left( {\mathcal{P}_{j,{\text{SCD}}}^{{\text{mm}},{\text{L}}}(\varpi ) + \mathcal{P}_{j,{\mathrm{SCD}}}^{{\mathrm{mm}},{\text{N}}}(\varpi )} \right)\ \times \nonumber \\
&\frac{{{M^{1 - \gamma }}}}{{{J^{1 - \gamma }} - 1}}\left[ {{{\left( {\varepsilon  + \frac{{(1 - \varepsilon )}}{\varpi }} \right)}^{1 - \gamma }} - {\varepsilon ^{1 - \gamma }}} \right].
\end{align}
Then the optimal two-stair content caching  can be found obtained by solving the following problem:
\begin{align}\label{Problem_reformul_umWave}
 &\mathop{\max }\limits_{\varepsilon,\varpi} \mathcal{P}_{\mathrm{SCD}}^{\mathrm{mm}} \mbox{~in \eqref{obj_mmWave_caching}} \quad  \nonumber \\
&\mathop{\rm{s.t.}} \;\;~\mathrm{C1} \mbox{ -- }\mathrm{C3}.
\end{align}
The problem \eqref{Problem_reformul_umWave} can be efficiently solved by following the decomposition approach. Given $\varpi$, the optimal $\varepsilon$ is obtained by solving the following equivalent sub-problem:
\begin{align}\label{Problem_formulation_mmWave_sub2}
&\mathop{\max }\limits_{0 \leq \varepsilon \leq 1 } \frac{{\left( {\ell _o^{{\text{mm}}} - 1} \right){\varepsilon ^{1 - \gamma }} + {{\left( {\varepsilon  + \frac{{(1 - \varepsilon )}}{\varpi }} \right)}^{1 - \gamma }} - \ell _o^{{\text{mm}}}}}{{{J^{1 - \gamma }} - 1}},
\end{align}
where $\ell_o^{\rm{mm}}=\frac{\left(\mathcal{P}_{j,\mathrm{SCD}}^{\rm{mm,L}}(1)+
\mathcal{P}_{j,\mathrm{SCD}}^{\rm{mm,N}}(1)\right)}{\left(\mathcal{P}_{j,\mathrm{SCD}}^{\rm{mm,L}}(\varpi)+
\mathcal{P}_{j,\mathrm{SCD}}^{\rm{mm,N}}(\varpi)\right)}$. The rest procedures follow the same approach in the section IV-A, except that the derivation of the search direction to solve the optimal $\varpi^*$, which is provided Appendix E.

\section{Results and Discussions}
In this section, the performance of the proposed caching schemes are evaluated by presenting numerical results. Performance comparison between cache-enabled $\mu$Wave and $\mathrm{mm}$Wave systems is also highlighted. The system parameters are shown in Table \ref{tab:passloss}, unless otherwise specified. 1 GHz and 60 GHz are chosen for the $\mu$Wave   and  $\mathrm{mm}$Wave frequency bands, respectively.
\begin{table}\footnotesize
\centering
\caption{Parameter Values.}\label{tab:passloss}
\begin{tabular}{c|c}
\hline
\textbf{Parameters}& \textbf{Values}\\
\hline \hline
Number of Antenna in $\mu$Wave-SBS($N_{\mu}$)& 2 \\ \hline
Main-lobe Array Gain in $\mathrm{mm}$Wave-SBS ($G_\text{mm}$)& 2 \\ \hline
LOS region ($D_{\rm{L}}$) & 15 m\\ \hline
Transmit power of each $\mu$Wave-SBS $P_\mu$& 20 dBm\\ \hline
Transmit power of each $\mathrm{mm}$Wave-SBS $P_\text{mm}$& 20 dBm\\ \hline
SBS's density for $\mu$Wave and $\mathrm{mm}$Wave & $\lambda_{\mu}$,$\lambda_{\mathrm{mm}}$=600/km$^2$\\ \hline
Path loss exponent $f_c$=1 GHz&$\alpha_\mu$=2.5\\ \hline
Path loss exponent $f_c$=60 GHz \cite{rappaport201238}&$\alpha_{\text L}$=2.25,$\alpha_{\text N}$=3.76\\ \hline
Bit rate of each \textcolor{black}{content} ($\eta/T$) &$4 \times 10^5 $bit/s\\ \hline
Available bandwidth in {$\mu$Wave} ($W_{\mu}$) &10 MHz\\ \hline
Available bandwidth in $\mathrm{mm}$Wave ($W_{\text{mm}}$) &1 GHz\\ \hline
SBS cache capacity ($M$)& \textcolor{black}{10}\\ \hline
Content library size ($J$) & \textcolor{black}{100}\\ \hline
Zipf exponent ($\gamma$)& 0 $\sim$ 2\\ \hline
\end{tabular}
\end{table}

\begin{figure}
\centering
\includegraphics[width=3.6in, height=3 in]{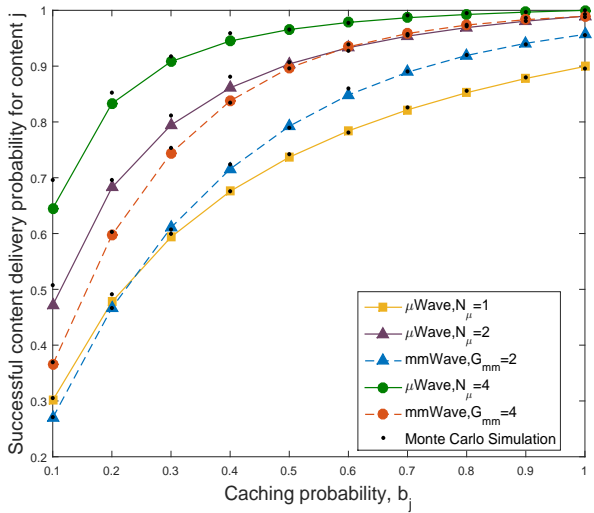}
\caption{\label{Fig_STP_bj}Successful content delivery probability for content $j$  versus the content placement probability. }
\end{figure}
Fig. \ref{Fig_STP_bj} verifies the SCDPs for content $j$ derived in Theorem 1 and Theorem 2 against the content placement probability. The analytical results are obtained from \eqref{muWave_STP}, \eqref{P_m_los_closeform} and \eqref{P_m_nlos_closeform}. The SCDP for {\color{black}an  arbitrary content $j$} is observed to be a monotonically increasing and concave function of the caching placement probability for both $\mu$Wave and $\mathrm{mm}$Wave systems.
Notice that all our derived  analytical results match very well with those ones via Monte Carlo simulations averaged over 2,000 random user drops and  marked by '$\cdot$'.

\begin{figure}
\centering
\includegraphics[width=3.6in, height=3 in]{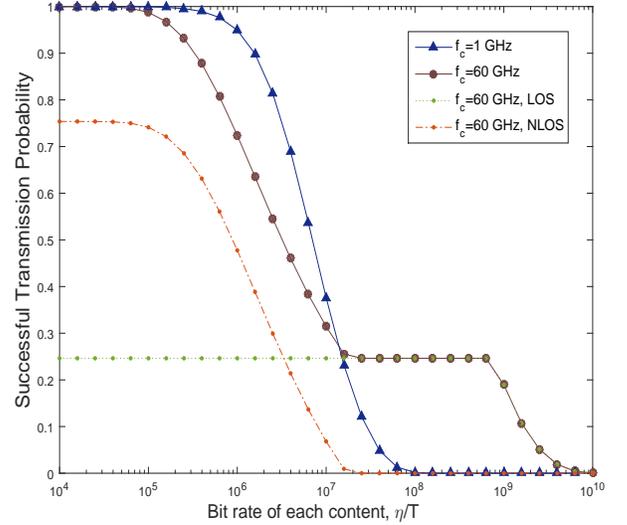}
\caption{\label{Fig_STP_eta}Successful transmission probability, $\lambda_{\mu,\mathrm{mm}}=400 /{\rm{km}}^2$,$b_j=1$.}
\end{figure}

In Fig. \ref{Fig_STP_eta}, we examine the comparison of successful transmission probabilities of $\mu$Wave from  \eqref{muWave_STP} and $\mathrm{mm}$Wave from \eqref{P_m_los_closeform} and \eqref{P_m_nlos_closeform} as \textcolor{black}{bit rate of each content} varies, which corresponds to the case with caching placement probability  $b_j=1$.  It is seen that when content size is small, the $\mu$Wave system shows   better performance than $\mathrm{mm}$Wave, but as the content size increases, the $\mathrm{mm}$Wave system outperforms the $\mu$Wave system for its ability to provide high capacity. The successful $\mathrm{mm}$Wave transmission probability  shows a `ladder drop' effect, and this is because the $\mathrm{mm}$Wave system combines LOS part and NLOS part.   The LOS effect is limited to the region within the distance $D_{\rm L})$ while NLOS has a much wider coverage, so when the required content size is small, the performance is dominated by the NLOS part. However, the NLOS part cannot provide high capacity due to the much larger path loss exponent $\alpha_{\rm N}$, so its performance drops steeply as the \textcolor{black}{bit rate of each content} increases.

Next, in Figs. \ref{Fig_SIR_M_N=1}-\ref{Fig_SIR_S_N=1}, {\color{black}{we compare the performance of the two proposed content placement schemes with the close-form optimal solution \cite{wen2016cache} and the intuitive MPC scheme {\cite{liu2016cache}} in the $\mu$Wave single-antenna case.}} \textcolor{black}{Note that in the general multi-antenna setting, the close-form optimal content placement is still unknown}. The SCDP with   different caching capacity $M$ is shown in Fig. \ref{Fig_SIR_M_N=1}.  \textcolor{black}{It is observed that the CCEO algorithm achieves exactly the same performance as the known optimal solution in \cite{wen2016cache}, and the proposed TS scheme  provides close-to-optimal and significantly better performance than the MPC solution, especially when $\gamma$ is large and the caching capacity $M$ is small}. The MPC solution is the worst caching scheme because it ignores the content diversity which is particularly  important when the content popularity is more uniform.  Fig. \ref{Fig_SIR_S_N=1} shows the SCDP with different content sizes $\eta$. \textcolor{black}{It is found that the SCDP of the TS scheme is closer to the optimum when the $\eta/T$  is large. However, as the bit rate of each content $\eta/T$ increases, both TS and MPC schemes become very close to the optimal solution.}
\begin{figure}
\centering
\includegraphics[width=3.6in, height=3 in]{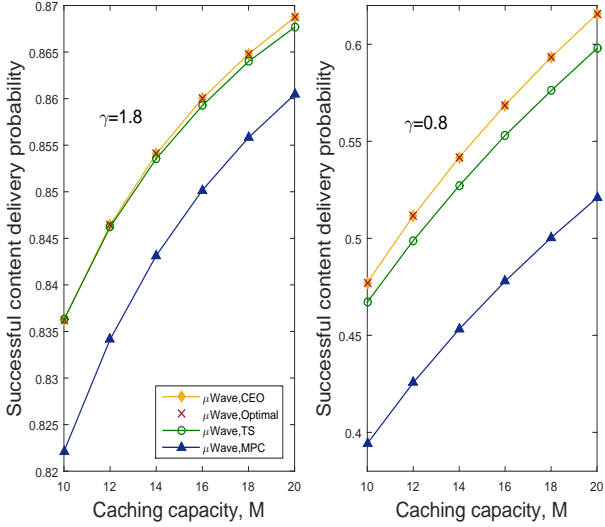}
\caption{\label{Fig_SIR_M_N=1}Successful content delivery probability for $\mu$Wave single antennas.}
\end{figure}

\begin{figure}
\centering
\includegraphics[width=3.6in, height=3 in]{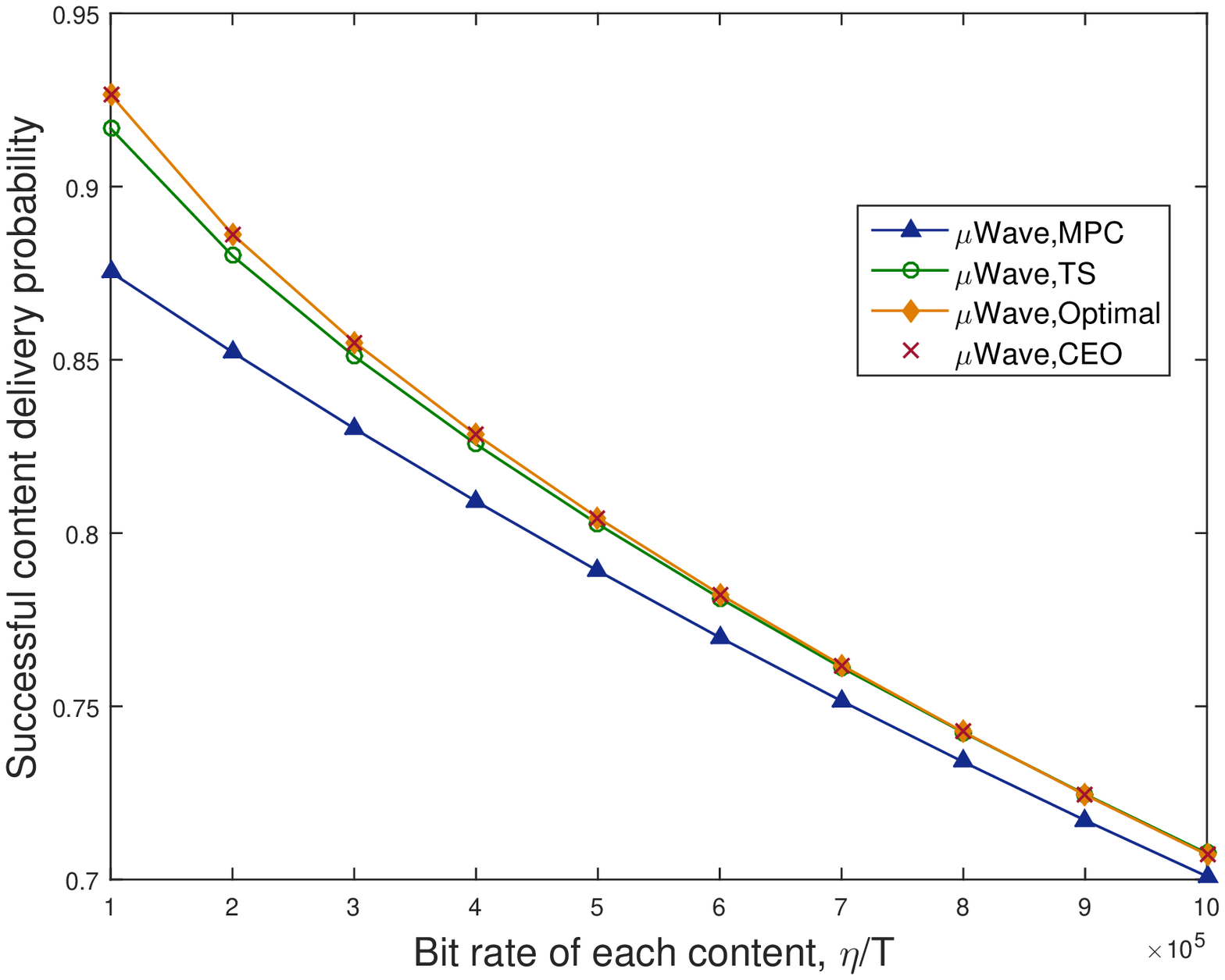}
\caption{\label{Fig_SIR_S_N=1}Successful content delivery probability for $\mu$Wave single antennas,  $\gamma=0.6$.}
\end{figure}


\begin{figure}
\centering
\includegraphics[width=3.6in, height=3 in]{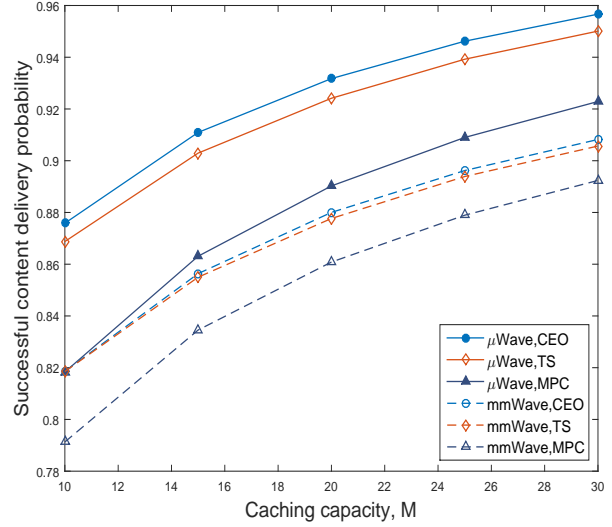}
\caption{\label{Fig_M_N=2}The impact of $M$ on the successful content delivery probability, $\gamma=1.5$.}
\end{figure}

Fig. \ref{Fig_M_N=2} shows the SCDP comparison of various systems with different caching capacities $M$. It shows that both of the proposed content placement schemes perform consistently better than MPC, \textcolor{black}{especially for the 60GHz $\rm{mm}$Wave, the SCDP of the TS scheme is close to that of the CCEO algorithm}. The results also indicate that $\mu$Wave always has a superior performance than the 60GHz $\mathrm{mm}$Wave with the same SBS density of $600 /{\rm{km}}^2$.

\begin{figure}
\centering
\includegraphics[width=3.6in, height=3 in]{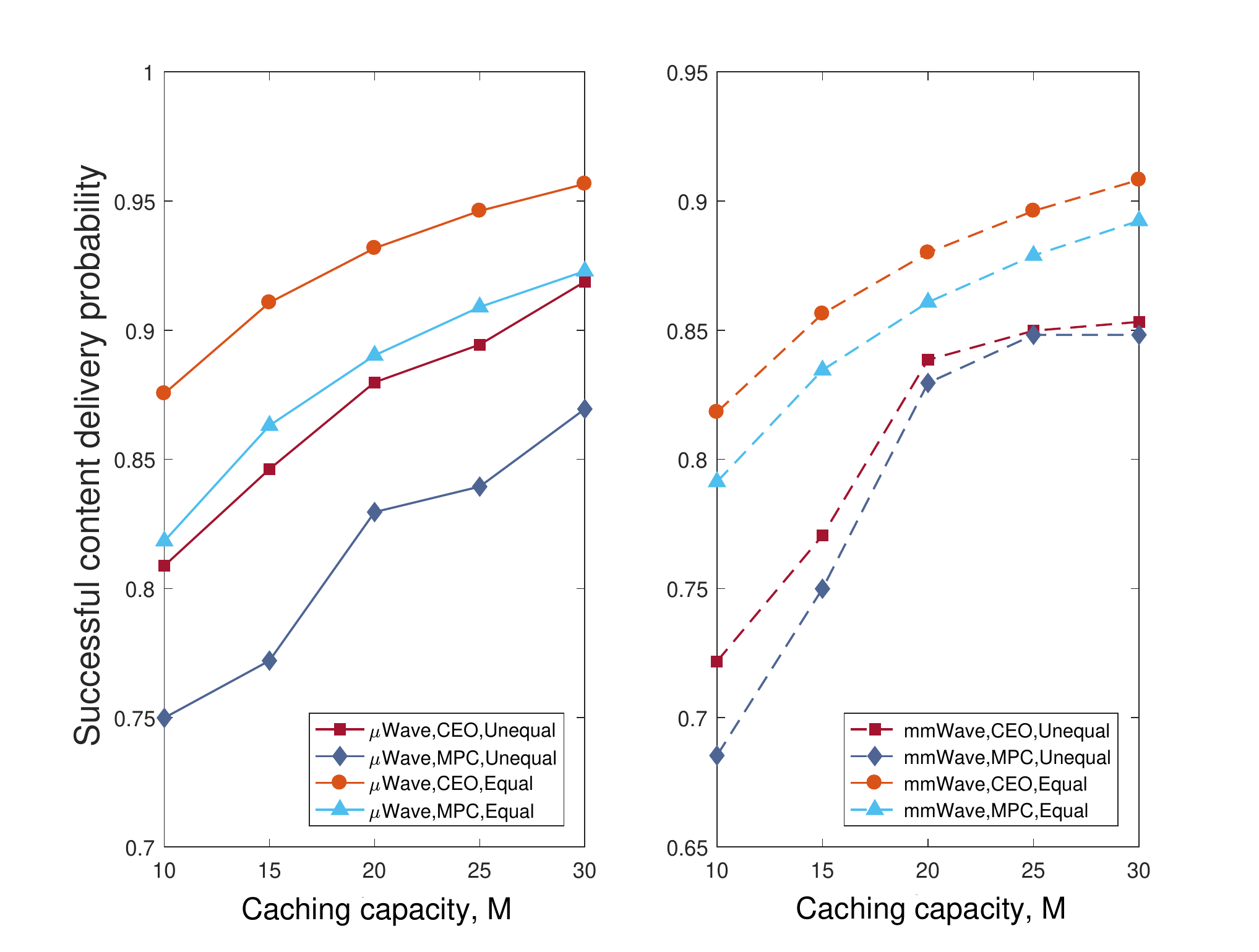}
\caption{\label{unequal}The impact of $M$ on the successful content delivery probability with unequal content size, $\gamma=1.5$.}
\end{figure}

\textcolor{black}{Fig. \ref{unequal} shows the SCDP comparison of various systems versus the  caching capacities $M$ with different content sizes. We generate a random set of of content size $S=\{ s_1,...,s_j,...s_J\}$, where $s_j $ denotes the content size of $f_j$.  For simplicity,  $s_j$ is chosen to be  $1$ or $2$ with equal probability of $0.5$ in our simulation. The caching probability satisfies  $\sum_{j=1}^J b_j \times s_j \leq M$. It is shown that  in the unequal-size content   case,  CEO still greatly outperforms MPC, following a similar trend as the equal-size content case.}

\begin{figure}
\centering
\includegraphics[width=3.6in, height=3 in]{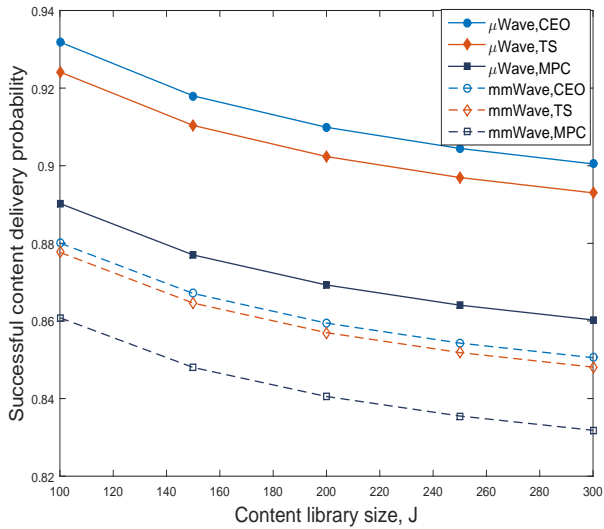}
\caption{\label{Fig_J_N=2}The impact of $J$ on the successful content delivery probability, $\gamma=1.5$, $M=20$.}
\end{figure}

Fig. \ref{Fig_J_N=2} studies the impact of content library size on SCDPs of different systems. It is seen  that as the library size $J$ increases, the SCDP drops rapidly. The gap between the proposed content placement schemes and the MPC scheme remain stabilized when the library size increases.

\begin{figure}
\centering
\includegraphics[width=3.6in, height=3 in]{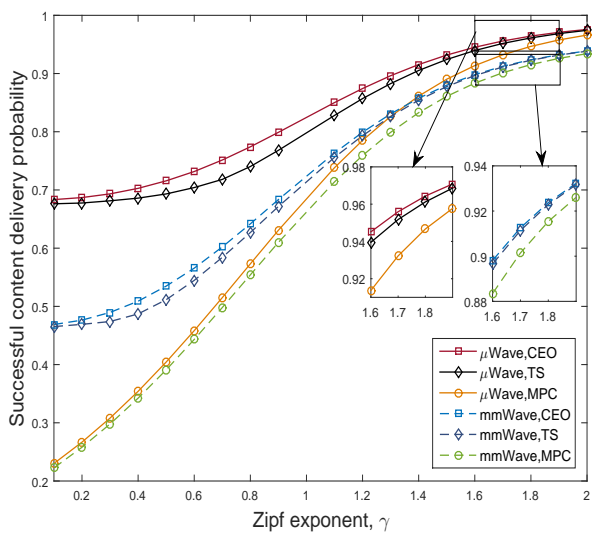}
\caption{\label{Fig_gamma_N=2} The impact of Zipf exponent $\gamma$ on the
successful content delivery probability.}
\end{figure}

Fig. \ref{Fig_gamma_N=2} compares the SCDPs for the two proposed content placement schemes against Zipf exponent $\gamma$. It can be seen that the SCDP increases with $\gamma$ because caching is more effective when the content reuse is high. In the high-$\gamma$ regime of both $\mu$Wave and $\mathrm{mm}$Wave systems, the content request probabilities for the first few most popular content are large, and SCDPs of both proposed placement schemes almost coincide. \textcolor{black}{It is noteworthy that the proposed TS placement scheme achieves performance close to the CCEO algorithm, especially in the $\mu$Wave system and at low and high $\gamma$ regimes}.

\begin{figure}
\centering
\includegraphics[width=3.6in, height=3 in]{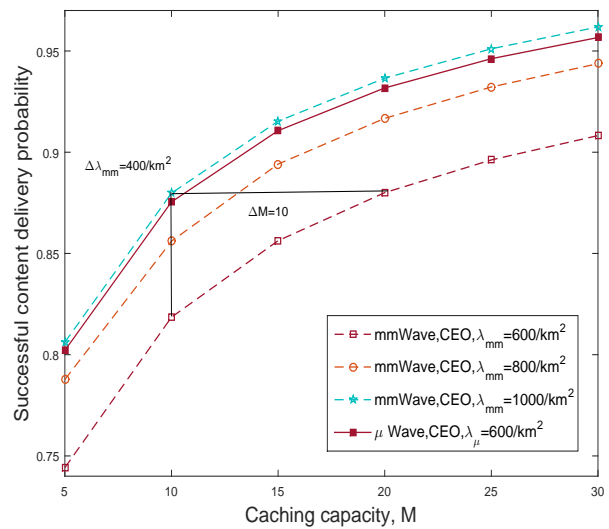}
\caption{\label{Fig_M_compare_N=2}Cache-density tradeoff , $\gamma=1.5$.}
\end{figure}

Finally, we investigate the cache-density tradeoff and its implication on the comparison of  $\mu$Wave and $\mathrm{mm}$Wave systems. \textcolor{black}{The CEO placement scheme is used.} Fig. \ref{Fig_M_compare_N=2}  demonstrates the SCDPs with  different caching capacity $M$, SBS densities $\lambda_{\mu}$ and $\lambda_\mathrm{mm}$. It is also observed that the $\mu$Wave channel is usually better than the $\mathrm{mm}$Wave channel \textcolor{black}{when $\lambda=600 /{\rm{km}}^2$}, so with the same SBS density, $\mu$Wave achieves higher SCDP. To achieve performance comparable to that of the $\mu$Wave system with SBS density of $600 /{\rm{km}}^2$, the $\mathrm{mm}$Wave system needs to deploy SBSs with a much higher density of $1000 /{\rm{km}}^2$, but the extra density of $\triangle \lambda_{\mathrm{mm}}$ =400 /km$^2$ is too costly to afford. Fortunately, by increasing the caching capacity from 10 to \textcolor{black}{20}, the $\mathrm{mm}$Wave system can achieve the same SCDP of \textcolor{black}{91}\% as the  $\mu$Wave system while keeping the same density of $600 /{\rm{km}}^2$. This result shows great promise of cache-enabled small cell systems because it is possible to trade off the relatively cheap storage for reduced  expensive infrastructure.


\section{Conclusion}
In this paper, we have investigated the performance of caching in $\mu$Wave and mmWave multi-antenna dense networks to improve the efficiency of content delivery. Using stochastic geometry, we have analyzed the successful content delivery probabilities and demonstrated the impact of various system parameters.  \textcolor{black}{We designed two novel caching schemes to maximize the successful content delivery probability with moderate to low complexities. The proposed CCEO algorithm can achieve near-optimal performance while the proposed TS scheme demonstrates performance close to CCEO with further reduced complexity}. An important implication of this work is that to reduce the performance gap between the \textcolor{black}{$\mu$Wave} and \rm{mm}Wave systems, increasing caching capacity is a low-cost and effective solution compared to the traditional measures such as using more antennas or increasing SBS density. As a promising future direction, to study cooperative caching in a \textcolor{black}{multi-band} {$\mu$Wave} and mmWave system could further reap the benefits of both systems.

\section*{Appendix A: Proof of Theorem 1}
\label{App:theo_1}
\renewcommand{\theequation}{A.\arabic{equation}}
\setcounter{equation}{0}

Based on \eqref{SCDP_muWave}, $\mathcal{P}_{\mathrm{SCD}}^{\mu}$ is calculated as
\begin{align}\label{A_1}
&\mathcal{P}_{\mathrm{SCD}}^{\mu}=\sum\limits_{j{\rm{ = }}1}^J {a_j} {\rm{Pr}}\left(\frac{{{P_\mu}{h^{\mu}_j}L\left( {\left| {{X^{\mu}_j }} \right|} \right)}}{{\mathcal{I}^\mu_j + \overline{\mathcal{I}}^\mu_j+ {\sigma^2_\mu}}} >\varphi_\mu\right) \nonumber\\
& = \sum\limits_{j{\rm{ = }}1}^J {a_j} \int_0^\infty  \underbrace{\Pr \left( { \frac{{{P_\mu}{h^{\mu}_j}L\left( x  \right)}}{{\mathcal{I}^\mu_j + \overline{\mathcal{I}}^\mu_j+ {\sigma^2_\mu}}} >\varphi_\mu } \right)}_{{P_{\operatorname{cov} }^\mu (x,b_j)}} f_{\left| {{X}}_j^{\mu} \right|}(x) dx ,
\end{align}
where $P_{\operatorname{cov} }^\mu (x,b_j)$ is the conditional coverage probability, and $f_{\left| {{X}}_j^{\mu} \right|}(x)$ is the PDF of the distance $\left| {{X}}_j^{\mu} \right|$. Then, we derive    $P_{\operatorname{cov} }^\mu (x,b_j)$ as
\begin{align}\label{A2}
& P_{\operatorname{cov} }^\mu (x,b_j)= \Pr \left( { \frac{{{P_\mu}{h^{\mu}_j}\beta_\mu x^{-\alpha_\mu}}}{{\mathcal{I}^\mu_j + \overline{\mathcal{I}}^\mu_j+ {\sigma^2_\mu}}} >\varphi_\mu } \right) \nonumber \\
&   = \int_0^\infty  {\Pr (h^{\mu}_j > \frac{{{\varphi_\mu}\left( {\tau {\text{  +  }}\sigma_\mu^2} \right){x^{{\alpha_\mu}}}}}{{{P_\mu} \beta_\mu}}){{d}}\Pr ({\mathcal I_{total} } \leq \tau )}  \nonumber\\
 &\hspace{-0.4 cm} =\int_0^\infty  {{e^{ - \frac{{\left( {\tau {\text{  +  }}\sigma_\mu^2} \right){\varphi_\mu}{x^{{\alpha_\mu}}}}}{{{P_\mu}\beta_\mu}}}}} \sum\limits_{n = 0}^{{N_\mu} - 1} {\frac{{{{\left( {\frac{{\left( {\tau {\text{  +  }}\sigma_\mu^2} \right){\varphi _\mu}{x^{{\alpha_\mu}}}}}{{{P_\mu}\beta_\mu}}} \right)}^n}}}{{n!}}} d\Pr \left( {{\mathcal {I}_{total}}\leq \tau } \right)
\end{align}
where $\mathcal{I}_{total}=\mathcal{I}^\mu_j + \overline{\mathcal{I}}^\mu_j$. Note that
\begin{align}\label{A3}
&{\left. {\frac{{{d^n}\left( {\exp \left( { - \frac{{\left( {\tau+\sigma_\mu^2} \right){\varphi_\mu}{\nu}}}{{{P_\mu}\beta_\mu}}} \right)} \right)}}{{d{\nu^n}}}} \right|_{\nu= {x^{{\alpha_\mu}}}}} \nonumber \\
&  = {\left( { - \frac{{\left( {\tau +\sigma_\mu^2} \right){\varphi_\mu}}}{{{P_\mu}\beta_\mu}}} \right)^n}\exp \left( { - \frac{{\left( {\tau {+}\sigma_\mu^2} \right){\varphi_\mu}{\nu}}}{{{P_\mu}\beta_\mu}}} \right).
\end{align}
By  using \eqref{A3}, \eqref{A2} can be rewritten as
\begin{align}\label{A4}
& P_{\operatorname{cov} }^\mu (x,b_j)  \nonumber\\
&=  \sum\limits_{n = 0}^{{N_\mu} - 1} {\frac{{{x^{n{\alpha_\mu}}}}}{{n!{{( - 1)}^n}}}} {\left. {\frac{{{d^n}\left[ {\exp ( - \frac{{\nu {\varphi_\mu}\sigma_\mu^2}}{{{P_\mu}\beta_\mu}})\mathcal{L}_{\mathcal{I}^\mu_j}(\frac{{{\varphi_\mu}\nu}}{{{P_\mu}\beta_\mu}}) \mathcal{L}_{\overline{\mathcal{I}}^\mu_j} (\frac{{{\varphi _\mu}\nu}}{{{P_\mu}\beta_\mu}})} \right]}}{{d{\nu^n}}}} \right|_{\nu  = {x^{{\alpha_\mu}}}}},
\end{align}
where $\mathcal{L}_{\mathcal{I}^\mu_j}\left(\cdot\right)$ is the Laplace transform of the PDF $\mathcal{I}^\mu_j$, and $\mathcal{L}_{\overline{\mathcal{I}}^\mu_j}\left(\cdot\right)$ is the Laplace transform of the PDF $\overline{\mathcal{I}}^\mu_j$.
Then $\mathcal{L}_{\mathcal{I}^\mu_j}\left(s\right)$ is given by
\begin{align}\label{A5}
& \mathcal{L}_{\mathcal{I}^\mu_j}\left(s\right)= {\mathbb{E}_{\Phi^\mu_j}}\left[ {\exp \left( { - s \sum\nolimits_{i \in {\Phi^\mu_j }\backslash o} {{P_{\mu}}
{h_{i,o}}L\left( {\left|X_{i,o}\right|} \right)}} \right)} \right] \nonumber \\
& = {\mathbb{E}_{\Phi^\mu_j}}\left[ {{\prod _{i \in \Phi^\mu_j \backslash \{ o \} }}{\mathbb{E}_{{h_{i,o}}}}\left\{ {\exp \left( { - s{P_\mu} h_{i,o} L\left( {\left|X_{i,o}\right|} \right)} \right)} \right\}} \right] \nonumber \\
& =   \exp \left[- {\int_x^\infty  {\left(1 - {\mathbb{E}_{{h_{i,o}}}}\left\{ {\exp \left( { - s{P_\mu} h_{i,o} \beta_\mu {r^{ - {\alpha_\mu}}}} \right)} \right\}\right)2\pi {b_j}{\lambda_\mu}rdr} } \right] \nonumber \\
&  = \exp \left[ { - 2\pi {b_j}{\lambda _\mu}\int_x^\infty  { \left(1 - \frac{1}{{1 + s{P_\mu}\beta {r^{ - {\alpha_\mu}}}}} \right)rdr} } \right].
\end{align}
Likewise, $\mathcal{L}_{\overline{\mathcal{I}}^\mu_j}\left(s\right)$ is given by
\begin{align}\label{A6}
& \mathcal{L}_{\overline{\mathcal{I}}^\mu_j}\left(s\right)= {\mathbb{E}_{\overline{\mathcal{I}}^\mu_j}}\left[ {\exp \left( { - s\sum\nolimits_{k \in {\overline{\Phi}^\mu_j}} {{P_{\mu}}
{h_{k,o}}L\left( \left|X_{k,o}\right| \right)}} \right)} \right] \nonumber \\
&= {\mathbb{E}_{\overline{\mathcal{I}}^\mu_j}}\left[ {{\prod _{k \in \overline{\mathcal{I}}^\mu_j }}{\mathbb{E}_{{h_{k,o}}}}\left\{ {\exp \left( { - s{P_\mu} h_{k,o} L\left( {\left|X_{k,o}\right|} \right)} \right)} \right\}} \right] \nonumber \\
& =   \exp \Bigg[- \int_0^\infty  \left(1 - {\mathbb{E}_{{h_{k,o}}}}\left\{ {\exp \left( { - s{P_\mu} h_{k,o} \beta_\mu {r^{ - {\alpha_\mu}}}} \right)} \right\}\right) \nonumber\\
&\qquad\qquad  \times 2\pi \left( {1 - {b_j}} \right){\lambda _\mu}rdr  \Bigg] \nonumber \\
& = \exp \left[ { - 2\pi \left( {1 - {b_j}} \right){\lambda_\mu}\int_0^\infty  {\left(1 - \frac{1}{{1 + s{P_\mu}\beta_\mu {r^{ - {\alpha_\mu}}}}}\right) rdr} } \right].
\end{align}

Substituting \eqref{A5} and \eqref{A6} into \eqref{A4}, after some manipulations, we can obtain the desired result \eqref{cov_pro_muWave}.

\section*{Appendix B: Proof of Theorem 2}
\label{App:theo_2}
\renewcommand{\theequation}{B.\arabic{equation}}
\setcounter{equation}{0}
Based on \eqref{SINR_mmWave} and \eqref{SCDP_mmWave}, the SCDP for a LOS $\mathrm{mm}$Wave link can be derived as
\begin{align}\label{P_m_los}
&\hspace{-0.3 cm} \mathcal{P}_{\mathrm{SCD}}^{\rm{mm,L}}= \int_0^{D_L} \Pr \left(  \frac{{{P_\mathrm{mm}} G_{\rm{mm}} \beta_{\rm{mm}} {y^{ - {\alpha _{\text{L}}}}} }}{{ {\sigma^2_\mathrm{mm}}}} > {\varphi_\mathrm{mm}} \right) f_{\left| {{Y^{\mathrm{mm}}_j }}\right|}(y)dy \nonumber \\
&   =  \mathbbm{1} (D_L<{d_{\text{L}}})\int_0^{D_L}  f_{\left| {{Y^{\mathrm{mm}}_j }}\right|}(y) dy  \nonumber\\
&\qquad\qquad\qquad \quad+\mathbbm{1}(D_L > {d_{\text{L}}})\int_0^{{d_{\text{L}}}} f_{\left| {{Y^{\mathrm{mm}}_j }}\right|}(y)  dy \nonumber \\
&   = 1- e^{ - \left( \min{\left( D_L,{d_{\rm{L}}} \right)} \right)^2 \pi b_j \lambda_{\rm{mm}} },
\end{align}
where $ d_\text{L}= {\left( {\frac{{P_\mathrm{mm} G_\mathrm{mm}\beta_{\rm{mm}} }}{{{\varphi _\mathrm{mm}}\sigma _\mathrm{mm}^2}}} \right)}^{ \frac{1}{\alpha _{\text{L}}}} $, $f_{\left| {{Y^{\mathrm{mm}}_j }}\right|}(y)$ is the PDF of the distance $\left| {{Y^{\mathrm{mm}}_j }}\right|$ between a typical user and its serving $\mathrm{mm}$Wave SBS , which is given by {\color{black}\cite{jo2012heterogeneous}
\begin{align}\label{probability_m}
f_{\left| {{Y^{\mathrm{mm}}_j }}\right|}(y)= 2\pi b_j \lambda_{\rm{mm}} y{e^{ - \pi b_j \lambda_{\rm{mm}} {y^2}}},~~~ y\geq 0.
\end{align}}

Similarly, the SCDP for a NLOS $\mathrm{mm}$Wave link can be derived as
\begin{align}\label{P_m_nlos}
&\mathcal{P}_{\mathrm{SCD}}^{\rm{mm,N}}= \int_{D_L}^\infty  {\Pr \left( {\frac{{{P_{\rm{mm}}}{G_{\rm{mm}}}\beta_{\rm{mm}} {y^{ - {\alpha _{\text{N}}}}}}}{{\sigma _{\rm{mm}}^2}} > {\varphi _{\rm{mm}}}} \right)} f_{\left| {{Y^{\mathrm{mm}}_j }}\right|}(y)  dy \nonumber \\
&   =  \mathbbm{1} (D_L<{d_{\text{N}}})\int_{D_L}^{d_{\text{N}}}  f_{\left| {{Y^{\mathrm{mm}}_j }}\right|}(y) dy  \nonumber\\
&=e^{ - {D_L^2}\pi b_j \lambda_{\rm{mm}} }- e^{ - \left(\max{\left( {D_L,{d_{\rm{N}}}}\right)}\right)^2 \pi b_j {\lambda_{\rm{mm}}}},
\end{align}
where $ d_\text{N}= {\left( {\frac{{P_\mathrm{mm} G_\mathrm{mm}\beta_{mm} }}{{{\varphi_\mathrm{mm}}\sigma _\mathrm{mm}^2}}} \right)}^{ \frac{1}{\alpha _{\text{N}}}} $. Thus, we obtain the SCDP expressions for a LOS/NLOS $\mathrm{mm}$Wave link.

\section*{Appendix C: Proof of Theorem 3}
\label{App:theo_2}
\renewcommand{\theequation}{C.\arabic{equation}}
\setcounter{equation}{0}

Let $f_1\left(\varepsilon\right)$ denote the objective function of the problem \eqref{Problem_formulation_muWave_sub}. We can obtain the first-order and the second-order derivatives of $f_1\left(\varepsilon\right)$ with respective to (w.r.t.) $\varepsilon$ as
\begin{align}\label{first_order_varep}
\frac{{\partial {f_1}\left( \varepsilon  \right)}}{{\partial \varepsilon }}& = \frac{1}{{{J^{1 - \gamma }} - 1}}\left[ {\left( {\ell _o^\mu  - 1} \right)\left( {1 - \gamma } \right){\varepsilon ^{ - \gamma }}} \right.\nonumber\\
& \left. { + \left( {1 - \gamma } \right)\left( {1 - \frac{1}{\varpi }} \right){{\left( {\varepsilon  + \frac{{(1 - \varepsilon )}}{\varpi }} \right)}^{ - \gamma }}} \right] ,
\end{align}
and
\begin{align}\label{first_order_varep}
\frac{{\partial^2{f_1}\left( \varepsilon \right)}}{{\partial \varepsilon^2}}& = \frac{1}{{{J^{1 - \gamma }} - 1}}\left( {1 - \gamma } \right)\left( { - \gamma } \right) \nonumber\\
&  \left[ {\left( {\ell _o^\mu  - 1} \right){\varepsilon ^{ - \gamma  - 1}} + {{\left( {1 - \frac{1}{\varpi }} \right)}^2}{{\left( {\varepsilon  + \frac{{(1 - \varepsilon )}}{\varpi }} \right)}^{ - \gamma  - 1}}} \right],
\end{align}
respectively. Note that $\ell_o^{\mu} \geq 1$  and $\frac{{1 - \gamma }}{{{J^{1 - \gamma }} - 1}}>0$, so we get $\frac{{\partial^2{f_1}\left( \varepsilon \right)}}{{\partial \varepsilon^2}} \leq 0$, which means that ${f_1}\left( \varepsilon \right)$ is a concave function w.r.t. $\varepsilon$. By setting $\frac{{\partial {f_1}\left( \varepsilon  \right)}}{{\partial \varepsilon }}$ to zero, we obtain the stationary point as
\begin{align}\label{sta_point}
\varepsilon_o={\left( {\left( {{{\left( {\frac{{{\ell _o^{\mu}} - 1}}{{{\varpi ^{ - 1}} - 1}}} \right)}^{ - 1/\gamma }} - 1} \right)\varpi  + 1} \right)^{ - 1}}.
\end{align}
Note that $0\leq \varpi \leq 1$, and ${\frac{{{\ell _o^{\mu}} - 1}}{{{\varpi ^{ - 1}} - 1}}}\geq0$, we have $\varepsilon_o \geq 0$. To obtain the optimal $\varepsilon^*$, we need to consider the following cases:
\begin{itemize}
  \item Case 1: $0\leq \varepsilon_o < 1$. In this case,  the optimal solution of the problem \eqref{Problem_formulation_muWave_sub} is $\varepsilon^*=\varepsilon_o$.
  \item Case 2: $ \varepsilon_o \geq 1$. In this case,  $\frac{{\partial {f_1}\left( \varepsilon  \right)}}{{\partial \varepsilon }} \geq 0$ for $\varepsilon\in\left[0,1\right]$, and thus the optimal solution of the problem \eqref{Problem_formulation_muWave_sub}  is $\varepsilon^*=1$.
\end{itemize}
Based on the above cases, we obtain \eqref{optimal_P2_1} and complete the proof.

\section*{Appendix D: Newton$^\prime$s Method to optimize $\varpi$ in \eqref{Problem_formulation_muWave_sub2}}
\label{App:theo_3}
\renewcommand{\theequation}{D.\arabic{equation}}
\setcounter{equation}{0}
We propose Newton's Method to solve the non-convex problem \eqref{Problem_formulation_muWave_sub2} with fast convergence. Based on \eqref{obj_approx_muWave}, the first-order derivative of ${\widetilde{\mathcal{P}}}_{\mathrm{SCD}}^{\mu}$ is given by
\begin{align}\label{P_scdp_1_order}
&\frac{\partial {\widetilde{\mathcal{P}}}_{\mathrm{SCD}}^{\mu}}{{\partial \varpi }} = \mathcal{P}_{j,{\text{SCD}}}^\mu \left( 1 \right)\frac{{{M^{1 - \gamma }}}}{{{J^{1 - \gamma }} - 1}}\left( {1 - \gamma } \right)\varepsilon _o^{ - \gamma }\frac{{\partial {\varepsilon _o}}}{{\partial \varpi }} \nonumber\\
&+ \frac{{\partial \mathcal{P}_{j,{\text{SCD}}}^\mu \left( \varpi  \right)}}{{\partial \varpi }}\frac{{{M^{1 - \gamma }}}}{{{J^{1 - \gamma }} - 1}}[{\left( {{\varepsilon _o} + \frac{{(1 - {\varepsilon _o})}}{\varpi }} \right)^{1 - \gamma }} - {\varepsilon_o ^{1 - \gamma }}] \nonumber\\
&+ \mathcal{P}_{j,{\text{SCD}}}^\mu \left( \varpi  \right)\frac{{{M^{1 - \gamma }}}}{{{J^{1 - \gamma }} - 1}}(1 - \gamma )\left\{ { - \varepsilon _o^{ - \gamma }\frac{{\partial {\varepsilon _o}}}{{\partial \varpi }}} \right. \nonumber\\
& +\left. {{{\left( {{\varepsilon _o} + \frac{{(1 - {\varepsilon _o})}}{\varpi }} \right)}^{ - \gamma }}\left[ {\frac{{\partial {\varepsilon _o}}}{{\partial \varpi }}(1 - {\varpi ^{ - 1}}){\text{ + }}\frac{{{\varepsilon _o} - 1}}{{{\varpi ^2}}}} \right]} \right\} ,
\end{align}
where
\begin{align}\label{var_first_order}
\frac{\partial \varepsilon_o}{\partial \varpi}&=-(\varepsilon_o)^2\Big({{{\big( {\frac{{{\ell _o} - 1}}
{{{\varpi ^{ - 1}} - 1}}} \big)}^{ - 1/\gamma }} - 1}  \nonumber\\
&-(\gamma\varpi)^{-1}\left(\ell_o-1\right)^{-\frac{1}{\gamma}}\Big({\varpi^{-1}-1} \Big)^{\frac{1}{\gamma}-1}  \Big),
\end{align}
and
\begin{align}\label{SCDP_Pr_first_order}
\frac{\partial \mathcal{P}_{j,\mathrm{SCD}}^{\mu}\left(\varpi\right)}{\partial \varpi}&=\int_0^\infty  {P_{\operatorname{cov} } (x )}\Big(2\pi {\lambda_\mu }x e^{ { - \pi \varpi{\lambda_\mu } {x^2}} } \nonumber\\
&\quad\qquad-2\pi^2\varpi {\lambda_\mu^2}x^3 e^{ { - \pi \varpi{\lambda_\mu } {x^2}} }\Big) dx,
\end{align}
respectively. In \eqref{SCDP_Pr_first_order}, to simplify the computation, we let ${P_{\operatorname{cov} } (x )}\approx {P_{\operatorname{cov} }^\mu (x,0 )}$, based on the fact that the interference $\mathcal{I}^\mu_j + \overline{\mathcal{I}}^\mu_j $ can be approximated as $\sum\nolimits_{k \in {\Phi_\mu}} {{P_{\mu}} {h_{k,o}}L\left( \left|X_{k,o}\right| \right)}$, particularly in the dense small cell scenarios~\cite{XinqinLin_2014}. Similarly, the second-order derivative of ${\widetilde{\mathcal{P}}}_{\mathrm{SCD}}^{\mu}(\eta,T)$ is given by
\begin{align}\label{P_scdp_2_order}
&\frac{{{\partial^{\rm{2}}{\widetilde{\mathcal{P}}}_{\mathrm{SCD}}^{\mu}}}}{{\partial {\varpi ^{\rm{2}}}}}=\mathcal{P}_{j,{\text{SCD}}}^\mu \left( 1 \right)\frac{{{M^{1 - \gamma }}}}{{{J^{1 - \gamma }} - 1}}\left( {1 - \gamma } \right)\varepsilon _o^{ - \gamma }( - \gamma \varepsilon _o^{ - 1}{(\frac{{\partial {\varepsilon _o}}}{{\partial \varpi }})^2} + \frac{{{\partial ^2}{\varepsilon _o}}}{{\partial {\varpi ^2}}})\nonumber\\
&+ \frac{{{\partial ^2}\mathcal{P}_{j,{\text{SCD}}}^\mu \left( \varpi  \right)}}{{\partial {\varpi ^2}}}\frac{{{M^{1 - \gamma }}}}{{{J^{1 - \gamma }} - 1}}[{\left( {{\varepsilon _o} + \frac{{(1 - {\varepsilon _o})}}{\varpi }} \right)^{1 - \gamma }} - {\varepsilon ^{1 - \gamma }}]\nonumber\\
&   + 2\frac{{\partial \mathcal{P}_{j,{\text{SCD}}}^\mu \left( \varpi  \right)}}{{\partial \varpi }}\frac{{{M^{1 - \gamma }}}}{{{J^{1 - \gamma }} - 1}}(1 - \gamma )\left\{ { - \varepsilon _o^{ - \gamma }\frac{{\partial {\varepsilon _o}}}{{\partial \varpi }}} \right. \nonumber\\
&+\left. {{{\left( {{\varepsilon _o} + \frac{{(1 - {\varepsilon _o})}}{\varpi }} \right)}^{ - \gamma }}\left[ {\frac{{\partial {\varepsilon _o}}}{{\partial \varpi }}(1 - {\varpi ^{ - 1}}){\text{ + }}\frac{{{\varepsilon _o} - 1}}{{{\varpi ^2}}}} \right]} \right\} \nonumber\\
&   + \mathcal{P}_{j,{\text{SCD}}}^\mu \left( \varpi  \right)\frac{{{M^{1 - \gamma }}}}{{{J^{1 - \gamma }} - 1}}(1 - \gamma )\left\{ {\gamma \varepsilon _o^{ - \gamma  - 1}\frac{{\partial {\varepsilon _o}}}{{\partial \varpi }} - \varepsilon _o^{ - \gamma }\frac{{{\partial ^2}{\varepsilon _o}}}{{\partial {\varpi ^2}}}} \right. \nonumber\\
& - \gamma {\left( {{\varepsilon _o} + \frac{{(1 - {\varepsilon _o})}}{\varpi }} \right)^{ - \gamma  - 1}}{\left[ {\frac{{\partial {\varepsilon _o}}}{{\partial \varpi }}(1 - {\varpi ^{ - 1}}){\text{ + }}\frac{{{\varepsilon _o} - 1}}{{{\varpi ^2}}}} \right]^2}  \nonumber\\
& +{\left( {{\varepsilon _o} + \frac{{(1 - {\varepsilon _o})}}{\varpi }} \right)^{ - \gamma }} \nonumber\\
&  \times \left. {\left[ {\frac{{{\partial ^2}{\varepsilon _o}}}{{\partial {\varpi ^2}}}(1 - {\varpi ^{ - 1}}){\text{ + }}\frac{{\partial {\varepsilon _o}}}{{\partial \varpi }}{\varpi ^{ - 2}}{\text{ + }}\frac{{\frac{{\partial {\varepsilon _o}}}{{\partial \varpi }}{\varpi ^2} - 2\varpi ({\varepsilon _o} - 1)}}{{{\varpi ^4}}}} \right]} \right\},
\end{align}
where
\begin{align}\label{varep_second_order}
\frac{\partial ^2\varepsilon_o}{\partial \varpi^2}&=2\varepsilon_o^{-1}\left(\frac{\partial \varepsilon_o}{\partial \varpi}\right)^2-\varepsilon_o^{2}\left(\ell_o-1\right)^{-\frac{1}{\gamma}}\Bigg[-\gamma^{-1}\varpi^{-2} \nonumber\\
&\Big({\varpi^{-1}-1} \Big)^{\frac{1}{\gamma}-1}+\gamma^{-1}\varpi^{-2}\Big({\varpi^{-1}-1} \Big)^{\frac{1}{\gamma}-1} \nonumber\\
&+\gamma^{-1} (\gamma\varpi)^{-3}\left(\frac{1}{\gamma}-1\right)\Big({\varpi^{-1}-1} \Big)^{\frac{1}{\gamma}-2}\Bigg]
\end{align}
and
\begin{align}\label{p_cov_second_order}
\frac{\partial^2\mathcal{P}_{j,\mathrm{SCD}}^{\mu}\left(\varpi \right)}{\partial \varpi^2}&=\int_0^\infty  {P_{\operatorname{cov} } (x )}\Big(-4\pi^2 {\lambda_\mu^2}x^3 e^{ { - \pi \varpi{\lambda_\mu } {x^2}} } \nonumber\\
&\quad\qquad+2\pi^3\varpi {\lambda_\mu^3}x^5 e^{ { - \pi \varpi{\lambda_\mu } {x^2}} }\Big) dx.
\end{align}
According to~\cite{yu2013multicell},  the search direction in Newton method can be defined as
\begin{align}\label{Newton_step_eq}
\Delta\varpi=\frac{\partial {\widetilde{\mathcal{P}}}_{\mathrm{SCD}}^{\mu}}{{\partial \varpi }}\Big{/}\left|\frac{{{\partial^{\rm{2}}{\widetilde{\mathcal{P}}}_{\mathrm{SCD}}^{\mu}}}}{{\partial {\varpi ^{\rm{2}}}}}\right|.
\end{align}
Then, $\varpi$ is iteratively updated according to
\begin{align}\label{var_update_eq}
\varpi\left(\varrho+1\right)=\left[\varpi\left(\varrho\right)+\delta_2\left(\varrho\right)\Delta\varpi\right]_0^{\frac{1}{\ell_o^\mu}},
\end{align}
where  $\varrho$ denotes the iteration index, $\ell_o^{\mu}=\frac{\mathcal{P}_{j,\mathrm{SCD}}^{\mu}\left(1 \right)}{\mathcal{P}_{j,\mathrm{SCD}}^{\mu}\left(\varpi \right)}$ is already defined in \eqref{Problem_formulation_muWave_sub}, 
$\delta_2\left(\varrho\right)$ is the step size that can be determined by backtracking line search~\cite{Convex_Book}. Thus, the optimal $\varpi^*$ can be obtained when reaching convergence.

\section*{Appendix E: Derivation of the search direction in the Newton Method to optimize  $\varpi$ in \eqref{Problem_formulation_muWave_sub2}}
\label{App:theo_3}
\renewcommand{\theequation}{E.\arabic{equation}}
\setcounter{equation}{0}
 We use Newton Method to solve the problem \eqref{Problem_formulation_muWave_sub2}. Here we only derive the search direction that involves the first and second-order derivative, and the rest is similar to Appendix D.

  We first derive the first-order derivative. Similar to \eqref{P_scdp_1_order}, we change $\frac{{\partial \mathcal{P}_{j,{\mathrm{SCD}}}^{\mu}\left( {\varpi  } \right)}}{{\partial \varpi }}$ to $\frac{{\partial \mathcal{P}_{j,{\mathrm{SCD}}}^{{\rm{mm}}}\left( {\varpi  } \right)}}{{\partial \varpi }}$ and get the result below:
\begin{align}\label{SCDP_Pr_first_order_mm}
& \frac{{\partial \mathcal{P}_{j,{\mathrm{ST}}}^{{\rm{mm}}}\left( {\varpi  } \right)}}{{\partial \varpi }} =  - D_\text{L}^2\pi {\lambda _{{\rm{mm}}}}{e^{ - D_\text{L}^2\pi \varpi {\lambda _{{\rm{mm}}}}}} \nonumber\\
&~~~~ + {\left( {\min \left( {{D_\text{L}},{d_{\text{L}}}} \right)} \right)^2}\pi {\lambda _{{\rm{mm}}}}{e^{ - {{\left( {\min \left( {{D_\text{L}},{d_{\text{L}}}} \right)} \right)}^2}\pi \varpi {\lambda _{{\rm{mm}}}}}}  \nonumber\\
&~~~~+ {\left( {\max \left( {{D_\text{L}},{d_{\text{N}}}} \right)} \right)^2}\pi {\lambda _{{\rm{mm}}}}{e^{ - {{\left( {\max \left( {{D_\text{L}},{d_{\rm{N}}}} \right)} \right)}^2}\pi \varpi {\lambda _{{\rm{mm}}}}}}.
\end{align}

Next we focus on   the second-order derivative.   Changing $ \frac{\partial^2\mathcal{P}_{{j,\mathrm{SCD}}}^{\mu}\left(\varpi \right)}{\partial \varpi^2}$ to $\frac{\partial^2\mathcal{P}_{{j,\mathrm{SCD}}}^{\rm{mm}}\left(\varpi \right)}{\partial \varpi^2}$  in \eqref{P_scdp_2_order}  leads to the following result:
\begin{align}\label{p_cov_second_order}
&\frac{\partial^2\mathcal{P}_{{j,\text{SCD}}}^{\rm{mm}}\left(\varpi \right)}{\partial \varpi^2}= D_\text{L}^4{\pi ^2}\lambda _{{\text{mm}}}^2{e^{ - D_\text{L}^2\pi \varpi {\lambda _{{\text{mm}}}}}} \nonumber\\
& ~~- {\left( {\min \left( {{D_\text{L}},{d_{\text{L}}}} \right)} \right)^4}{\pi ^2}\lambda _{{\text{mm}}}^2{e^{ - {{\left( {\min \left( {{D_\text{L}},{d_{\text{L}}}} \right)} \right)}^2}\pi \varpi {\lambda _{{\text{mm}}}}}} \nonumber\\
&~~ - {\left( {\max \left( {{D_\text{L}},{d_{\text{N}}}} \right)} \right)^4}{\pi ^2}\lambda _{{\text{mm}}}^2{e^{ - {{\left( {\max \left( {{D_\text{L}},{d_{\text{N}}}} \right)} \right)}^2}\pi \varpi {\lambda _{{\text{mm}}}}}}.
\end{align}
Therefore,  the search direction in Newton Method can be expressed as
\begin{align}\label{Newton_step_eq}
\Delta\varpi^{\rm{mm}}=\frac{\partial {\widetilde{\mathcal{P}}}_{\mathrm{SCD}}^{\rm{mm}}}{{\partial \varpi }}\Big{/}\left|\frac{{{\partial^{\rm{2}}{\widetilde{\mathcal{P}}}_{\mathrm{SCD}}^{\rm{mm}}}}}{{\partial {\varpi ^{\rm{2}}}}}\right|.
\end{align}

\bibliographystyle{IEEEtran}

\end{document}